\newtheorem{lemma}{Lemma}
\newtheorem{prop}{Proposition}
\newtheorem{corollary}{Corollary}
\newtheorem{theorem}{Theorem}
\begin{document}

\title{Asynchronous Ad Hoc Networks with\\ Wireless Powered Cognitive Communications}

\author{\IEEEauthorblockN{Eleni Demarchou, \IEEEmembership{Student Member, IEEE}, Constantinos Psomas, \IEEEmembership{Senior Member, IEEE}, and Ioannis Krikidis, \IEEEmembership{Fellow, IEEE}}

\thanks{E. Demarchou, C. Psomas, and I. Krikidis are with the IRIDA Research Centre for Communication Technologies, Department of Electrical and Computer Engineering, University of Cyprus, Cyprus (e-mail: \{edemar01, psomas, krikidis\}@ucy.ac.cy).
  
This work was co-funded by the European Regional Development Fund and the Republic of Cyprus through the Research Promotion Foundation, under the projects INFRASTRUCTURES/1216/0017 and POST-DOC/0916/0256.}}

\maketitle

\begin{abstract}
Over the recent years, the proliferation of smart devices and their applications has led to a rapid evolution of the concept of the Internet of Things (IoT), advancing large scale machine type networks which are characterized by sporadic transmissions of short packets. In contrast to typical communication models and in order to capture a realistic IoT environment, we study an asynchronous channel access performed by a primary ad hoc network underlaid with a cognitive secondary wireless-powered ad hoc network. Specifically, we consider that the primary transmitters are connected to the power grid and employ asynchronous transmissions. On the other hand, the cognitive secondary transmitters have radio frequency energy harvesting capabilities, and their asynchronous channel access is established based on certain energy and interference based criteria. We model this sporadic channel traffic with time-space Poisson point processes and by using tools from stochastic geometry, we provide an analytical framework for the performance of this asynchronous system. In particular, we provide closed-form expressions for the information coverage probability and the spatial throughput for both networks and we derive the meta distribution of the signal-to-interference-plus-noise ratio. Finally, we present numerical results and provide important insights behind the main system parameters.
\end{abstract}

\begin{IEEEkeywords}
Asynchronous access, cognitive radio, wireless power transfer, Ad hoc networks, Poisson point process.
\end{IEEEkeywords}

\section{Introduction}
Over the recent years, there has been a rapid growth of the concept of the Internet of Things (IoT), connecting all kinds of different devices in order to improve the quality of life in various aspects. From system control and monitoring to data collection, IoT and machine to machine communications (M2M) cover a long range of operations, achieving to upgrade the systems management to more intelligent and more efficient \cite{IoT}. With all the benefits that have arisen, the applications and the development of smart devices are still growing. According to Cisco, by 2021 there will be 3.3 billion of M2M connections \cite{CISCO}, implying the development of networks with massive number of devices and sensors. However, one of the main challenges of such networks, is the power supply for those devices where conventional solutions such as batteries, require replacements or recharging which may be inconvenient or even infeasible. Therefore, the need for a practical and more efficient solution is mandatory for realizing these foreseen large scale networks.

Significant efforts have been devoted to the study of wireless power transfer (WPT), where devices are wirelessly powered by harvesting energy from electromagnetic radiation. Specifically, the devices are equipped with a rectifying antenna (rectenna) and a simple circuit for converting the radio frequency (RF) signals into direct current \cite{WPT}. Even though the energy conversion efficiency of WPT is limited, it is able to charge devices with low power requirements in a more flexible way, since charging can be controlled; in contrast to harvesting from renewable energy sources, where the available power for harvesting is unpredictable \cite{benefits WPT}. As such, WPT is a potential technology to meet the power source requirements of dense networks deemed by the applications of IoT. Towards this aim, the integration of WPT in communication systems has received significant attention by both industry and academia. The work in \cite{nonlinearities} sheds light on several input-output power models by considering the characteristics of RF energy harvesting systems and by taking into account the harvester's sensitivity and saturation levels. In order to maximize the harvested energy, the authors in \cite{TSP1} and \cite{tvt2} consider multiple antennas at the transmitter for energy beamforming. In the context of wireless powered communication networks (WPCN), the authors in \cite{RUI} introduce the protocol ``harvest-then-transmit'', where the users harvest RF energy from an access point during the downlink and transmit back information during the uplink. The same protocol is also applied in \cite{Karagiannidis}, where the authors maximize the network throughput by deriving the optimal power allocation of the access point's power as well as the time sharing among the wirelessly powered nodes which employ time-division-multiple-access (TDMA). By using tools from stochastic geometry, the authors in \cite{beacon} consider mobile devices which harvest energy from power beacons and transmit out-of-band RF signals; the uplink performance for a given outage constraint is investigated. The RF sources are also randomly distributed in the work \cite{Ginibre}, where the authors obtain the RF energy harvesting rate and provide appropriate upper bounds for the power and transmission outage probabilities. The authors in \cite{TVT} combine tools from both stochastic geometry and game theory in order to minimize the consumption of non-renewable energy and provide two energy trading approaches applicable in the energy harvesting small cell base stations (BSs).

Another challenge regarding the IoT and the employment of intelligent networks, is the spectrum allocation, where several types of devices with different applications establish communications by sharing a common medium. The emerging cognitive radio (CR) networks come out as a promising solution, since they allow the coexistence of a secondary network with a primary network, by dynamic spectrum access \cite{CR magazine}. This is achieved by spectrum sensing from the cognitive transmitters in order to decide whether spectrum access will not significantly affect the primary network \cite{spectrum sensing}. Specifically, in CR underlay networks, transmissions are established based on the permitted interference at the primary users \cite{CR}, \cite{Zhang}. The potential applications of CR, appear in various environments which deal with the spectrum congestion. The authors in \cite{Hossein1} consider macro BSs which are underlaid with CR femtocell BSs and they obtain the optimal spectrum sensing threshold which minimizes the outage probability of the femtocell BSs. By using stochastic geometry, the authors in \cite{LI} study a wireless powered underlay CR radio network, where secondary users harvest energy from RF transmissions in the downlink and transmit in the uplink by using the harvested energy with the TDMA scheme. Under an interference power constraint, they derive the optimal power control and the time allocation which maximizes the sum rate of the secondary users in the CR network. In this context, the authors in \cite{Hossein2} propose a random and a prioritized spectrum access policy for the coexistence of CR networks with primary cellular networks and they show that energy harvesting can be used with the CR transmission providing sufficient quality of service without degrading the primary network's performance. Furthermore, the work in \cite{RUI2} considers a similar energy harvesting approach for a CR secondary network, while assigning to each primary transmitter a guard zone to protect its receiver from the interference which occurs from the CR network. Under specific outage-probability constraints, the authors obtain the optimal transmission power and density of secondary transmitters which maximize the secondary network throughput. 

While the literature is rich in the study of WPCN applied in CR networks, in the classical communication frameworks the research community focus on coordinated and slotted transmissions by assuming perfect synchronization. This approach though cannot capture a realistic environment of a large scale M2M network \cite{Access2}. More specifically, in massive machine type networks, low power devices employ sporadic, event-driven transmissions of short packets \cite{machine1}. Consequently, to draw general insights on the performance of such large scale distributed networks, space randomness as well as uncoordinated transmissions should be taken into consideration. Towards in capturing the space randomness, a widely used stochastic pattern is the Poisson Point Process (PPP), which models the random locations of the network nodes \cite{HAE}. By extending the conventional PPP to time-space PPP (TS-PPP), also known as the Poisson Rain model, the randomness in both time and space can be taken into consideration \cite{Poisson Rain}. The authors in \cite{Aloha} adopt the TS-PPP and they use tools from stochastic geometry to provide the coverage probability for non-slotted Aloha wireless ad hoc networks. In addition, a similar approach is followed by the work \cite{full duplex1}, where an analytical framework of asynchronous large scale full-duplex networks is derived. In the context of WPCN, the random channel access is evaluated in \cite{asynchronous}, where the authors consider asynchronous backscatter communications with the radio frequency identification tag to be powered by RF harvesting. To the best of the authors knowledge, the asynchronous access in wireless powered cognitive communications has not yet been investigated. 

Motivated by the above, in this work, we investigate a cognitive secondary ad hoc network which is underlaid with a primary ad hoc network. In order to capture the characteristics of large scale M2M networks, we consider that the channel access in both networks is uncoordinated and unslotted. We apply the TS-PPP to model the random distribution of each ad hoc pair in both space and time. By using tools from stochastic geometry we provide a rigorous mathematical framework to evaluate the performance of each network. Specifically, the main contributions of this paper are summarized as follows:
\begin{itemize}
	\item We consider that the secondary transmitters have WPT capabilities and harvest energy from the asynchronous interfering transmissions in the network. We derive, for a predefined harvesting duration, the probability that the harvested energy exceeds a given energy threshold. In order to mitigate the overall network interference, each secondary transmitter employs a cognitive-based transmission scheme. In particular, we consider a protection area i.e., a guard zone around each secondary transmitter and if a primary receiver is located within this guard zone, then the secondary transmitter remains idle. By taking into account both the cognitive and energy capabilities, we derive in closed-form the probability of transmission for the secondary transmitters.
	\item Furthermore, for a predefined information transmission duration, we obtain closed-form expressions for the Laplace transform of the time-averaged interference for both networks. Using the Laplace transform and by taking into account all the active transmissions which are established over the information transmission duration, we derive the probability that the received signal strength attains a minimum threshold, i.e., the information coverage probability and we obtain the spatial throughput. 
	\item In addition, we apply the second order moment matching technique to represent the conditional coverage probability as a Beta random variable and we evaluate the distribution of the conditional information coverage probability. This distribution, also referred to as meta distribution in the literature, has been recently proposed as a fine-grained performance metric regarding the network links' reliability \cite{JSAC}.
	\item We present numerical results that validate our analysis and we discuss how the main system parameters affect the performance of the networks. We show that with long information and energy harvesting duration, more energy can be harvested at the secondary transmitters. However, even though long information transmissions improve the primary network's performance, they affect negatively the transmission probability and so the spatial throughput of the secondary network decreases. Finally, we show that with the proper combination of the density and the guard zone area size, the two networks can co-exist and achieve a decent quality of service. 
\end{itemize}

The rest of the paper is organized as follows. Section \ref{system} presents the system model and our main assumptions. Section \ref{secondary} provides the analytical framework for the derivation of the transmit probability for the secondary network. Section \ref{analysis} presents the analytical results for the coverage probabilities and provides the spatial throughput for each network. Finally, Section \ref{numer} presents the numerical results and Section \ref{Conclusion} concludes our work.

\underline{Notation:} $\mathbb{P}(X)$ represents the probability of the event $X$ with expected value $\mathbb{E}(X)$; $\mathbf{1}(x)$ is an indicator function which gives $1$ if $x$ is true, otherwise gives $0$; $\Im(x)$ returns the imaginary part of the complex number $x$ and $\imath = \sqrt{-1}$ denotes the imaginary unit; $\csc(\theta)$ is the cosecant of angle $\theta$ \cite{GRAD}.

\section{System Model}\label{system}
\begin{figure}[t]\centering
	\includegraphics[width=\linewidth]{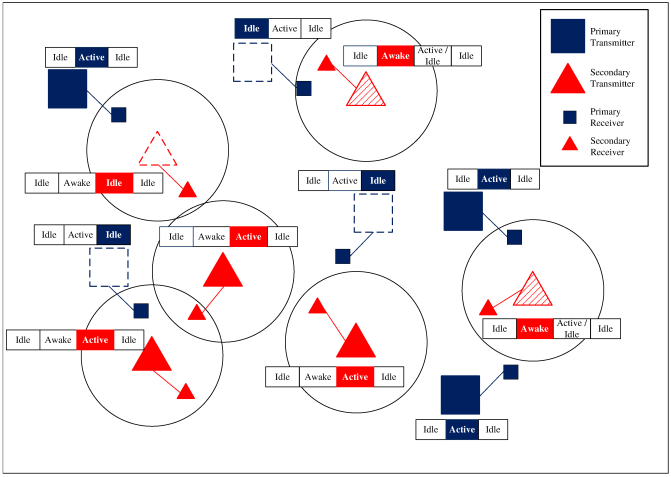}
	\caption{The primary ad hoc network underlaid with the secondary ad hoc network. The transmitters are captured along with their phases over time where the shaded ones indicate their current phase; circles represent the guard zone of the cognitive secondary transmitters.}\label{fig1}
\end{figure}
\subsection{Asynchronous network model}
We consider a primary ad hoc network underlaid with a cognitive secondary ad hoc network i.e., all nodes establish communications over the same channel. Both networks consist of a random number of transmitter-receiver pairs. Due to the low-power nature of the devices, pairing is achieved if the intended pair is located within a maximum distance \cite{Hossein2}. We assume a worst case scenario, where each transmitter is separated by its paired receiver by a distance $d$, which is common and fixed for all the pairs. The channel access is considered to be uncoordinated and asynchronous (e.g. un-slotted Aloha protocol \cite{Aloha}, \cite{asynchronous}) and each transmitter accesses the channel for a duration $T_I$. Note that, the transmitted packet length determines $T_I$ \cite{Aloha}. The primary transmitters are connected to the power grid and transmit with constant power $P_1$. Their operation consists of two phases: an idle phase (i.e. sleep phase) and an active phase (i.e. channel access phase). The randomness, both in time and space, is modeled for each network by an independent homogeneous TS-PPP \cite{Aloha}. We denote by $\Phi_1 = \{(x,t_x)\}$ the TS-PPP of density $\lambda_1$ modeling the primary transmitters, where $x \in \mathbb{R}^2$ is the location of the transmitter that initiates a transmission at time $t_x$. It is worth mentioning that the TS density, describes the sporadic channel traffic for a given area and time period, i.e., defines the number of nodes which are active at some random time and location \cite{full duplex1}. The secondary transmitters have RF energy harvesting capabilities and take cognition-based decisions regarding their operation. Therefore, while a primary transmitter's operation has two phases, a secondary transmitter undergoes three phases: an idle phase, an awake phase (i.e. RF harvesting/cognition phase), and an active phase. Specifically, during the awake phase, a secondary transmitter does the following: a) harvests RF ambient energy for a duration $T_E$, b) if the harvested energy exceeds a predefined threshold, the transmitter switches to active if an interference based criterion is satisfied (see Sections \ref{eh} and \ref{sensing} for details). The active secondary transmitters transmit with power $P_2$ which is a function of the harvested ambient RF energy. We denote by $\Phi_2 = \{(y,t_y)\}$ the TS-PPP of density $\lambda_2$ modeling the secondary transmitters, where $y \in \mathbb{R}^2$ is the location of the transmitter that enters the second phase at time $t_y$. Fig. \ref{fig1} schematically depicts our considered network model.

\subsection{Channel model}
All the wireless links in the network are considered to suffer from both small-scale block fading and large scale path-loss effect. The coherence time of the channel is assumed to be long enough such that all the signals experience flat Rayleigh fading. As such, the channel coefficient $h_x$ for the forward link from the primary transmitter located at $x \in \mathbb{R}^2$ to a point located at the origin is an exponentially distributed random variable with unit variance i.e., $h_x \sim \exp(1)$. Similarly, the channel coefficient $g_y$ from the secondary transmitter, located at $y \in \mathbb{R}^2$, to the origin is also exponentially distributed such that $g_{y} \sim \exp(1)$. We define as typical, the node which is located at the origin and leaves the idle phase at time zero. As such, we denote by $h_0$ and $g_0$ the channel coefficient of the typical primary and secondary link, respectively. The path-loss model is proportional to $(1+l^{\alpha})^{-1}$ where $l$ is the Euclidean distance between a transmitter and a receiver and $\alpha$ is the propagation exponent with $\alpha > 2$. Moreover, we denote by $r_x$ and $w_y$, the distances from the origin to a primary transmitter located at $x$ and a secondary transmitter located at $y$, respectively. Finally, all wireless links exhibit additive white Gaussian noise (AWGN) with variance $\sigma^2$.

\subsection{Energy harvesting model}\label{eh}
Consider a typical secondary transmitter located at the origin, which awakes at time zero and harvests ambient RF energy during the time interval $[0, T_E]$. The incident time-varying power available for harvesting can be expressed as\footnote{Note that, the amount of received power is dominated by the primary transmissions and the received power from secondary transmitters is considered negligible \cite{asynchronous}.} 
\begin{equation}\label{yt}
y(t) = P_1 \sum_{(x,t_x) \in \Phi_1} \phi(t,t_x) h_x (1+r_x^\alpha)^{-1},
\end{equation}
where $\phi(t,t_x) \triangleq \mathbf{1}(t_x \leq t \leq t_x+T_I)$ specifies whether the transmitter $\{(x,t_x)\} \in \Phi_1$ is active at time $t$. Therefore, the total harvested energy $E_H$ at the secondary transmitter during the harvesting period $T_E$ is
\begin{equation}
E_H = \int_0^{T_E} y(t)\,dt.
\end{equation}
In order to be able to power its operations, $E_H$ should be higher than a predefined threshold $\epsilon$ \cite{asynchronous}. By taking this constraint into account, along with the harvesting circuit's saturation level \cite{nonlinearities}, which we denote by $\mathcal{E}$, the input-output relationship model for the power $p$ at the typical secondary transmitter is expressed as follows
\begin{equation}\label{power}
p =
\begin{cases}
0, & E_H < \epsilon,\\
\frac{\mathbb{E}[E_H]}{T_I}, & \epsilon \leq E_H < \mathcal{E},\\
\frac{\mathcal{E}}{T_I}, & E_H \geq \mathcal{E},
\end{cases}
\end{equation}
where $\mathbb{E}[E_H]$ is the expected value of $E_H$. Note that, we consider the expected value of $E_H$ for the sake of analytical tractability \cite{asynchronous}.

We consider that the secondary transmitters employ the ``harvest-then-transmit'' scheme, where the harvested energy is lost unless is enough, i.e., $E_H > \epsilon$, to proceed with information transmission \cite{RUI}. In this case, a secondary transmitter can transmit, on average, with power $P_2$ which is given by
\begin{align} \label{transmit power}
P_2 = \mathbb{E}[p],
\end{align}
where $\mathbb{E}[p]$ is the expected value of $p$ over the three cases given in \eqref{power}. On the other hand, if sufficient energy is not harvested during the time interval $T_E$, then the secondary transmitter switches from awake back to idle mode.

\subsection{Network sensing model}\label{sensing}
Since all communications are established by accessing the same medium, interfering signals occur from each transmitter (primary or secondary) that occupies the channel. Therefore, the cognitive secondary transmitters control their transmissions to confine additive interference to nearby active primary receivers. As such, if sufficient energy has been harvested at a secondary transmitter i.e., $E_H > \epsilon$, it switches to the active phase and proceeds with information transmission if and only if the aggregate interference in the network is below a certain level. This criterion is defined as a guard zone around the transmitter. Therefore, if an active primary receiver falls within the guard zone, then the secondary transmitter decides not to transmit to its paired receiver. The guard zone is modeled by a disk of radius $\rho$ at the centre of each secondary transmitter as shown in Fig. \ref{fig1}.

\subsection{Information transmission model}
Recall that a primary transmitter switches randomly (in time) to the active phase for an interval $[0,T_I]$. On the other hand, a secondary transmitter switches to the active phase and proceeds with information transmission, if the criteria described in Sections \ref{eh} and \ref{sensing} hold. This asynchronous nature of channel access introduces a time-varying interference for the typical period $[0,T_I]$.

Consider the typical receiver (primary or secondary) located at the origin, whose paired transmitter is active at time zero. Interference is caused by any primary transmitter which is active as well as any secondary transmitter which has satisfied the criteria of the awake phase, within the interval $[0, T_I]$. Since the energy harvesting and cognition operations of the secondary transmitters depend on the primary transmitters, the processes $\Phi_1$ and $\Phi_2$ are not independent. However, for the sake of tractability, the secondary transmitters which are active during the information transmission period, are approximated by a thinned version of $\Phi_2$ \cite{HAE}, denoted by $\Phi_2^a$, with density $\lambda_2^a \triangleq \lambda_2 \pi_s$, where $\pi_s$ is the transmit probability of a secondary transmitter \cite{Hossein2}. Denote by $I_1(t)$ and $I_2(t)$ the time-varying interference which occurs from primary and secondary transmitters, respectively. Then, we have
\begin{align}
&I_1(t) = P_1 \sum_{(x,t_x) \in \Phi_1} \phi(t,t_x) h_x (1+r_x^{a})^{-1}\label{I1},\\
&I_2(t) = P_2 \sum_{(y,t_y) \in \Phi_2^a} \phi(t,t_y) g_y (1+w_y^{a})^{-1}\label{I2},
\end{align}
where $\phi(\cdot,\cdot)$ is given in Section \ref{eh}.

For the performance evaluation of the considered system, we will use the signal-to-interference-plus-noise ratio (SINR). Therefore, the received SINR at the typical primary receiver is given by
\begin{equation}\label{sir1}
\text{SINR}_1 = \frac{P_1 h_0 (1+d^a)^{-1}}{\mathrm{I}_1+\mathrm{I}_2+\sigma^2},
\end{equation}
where
\begin{align}\label{timeaveragedI1}
\mathrm{I}_1 = \frac{1}{T_I} \int_0^{T_I} I_1(t)\, dt,
\end{align}
and
\begin{align}\label{timeaveragedI2}
\mathrm{I}_2 = \frac{1}{T_I} \int_0^{T_I} I_2(t)\, dt,
\end{align}
is the aggregate interference caused by the primary and secondary transmitters, respectively, averaged over time \cite{full duplex1}.

Similar with the primary link, we take into account the asynchronous channel access and express the received SINR at the typical secondary receiver as a function of the time-averaged interference as follows
\begin{equation}\label{sir2}
\text{SINR}_2 = \frac{P_2 g_0 (1+d^a)^{-1}}{\mathrm{I}_1+\mathrm{I}_2+\sigma^2}.
\end{equation}

In addition, we consider that a receiver successfully decodes the received signal from its paired transmitter if the $\text{SINR}$ is above a predefined threshold $\zeta$. Furthermore, we assume that during the information transmission slot, the spectral efficiency is $R$ bits per channel use (bpcu) for all the links. As such, for the network performance evaluation our main metrics are the spatial throughput and the meta distribution of SINR.

\section{Wireless Powered Secondary Network}\label{secondary}
In this section, we derive the probability $\pi_s$ that a secondary transmitter is in active mode i.e., the energy and empty guard zone requirements are satisfied and thus the channel can be accessed by the secondary transmitter. For this purpose, we first evaluate the probability $\pi(\epsilon)$ that sufficient energy has been harvested by a secondary transmitter i.e., the complementary cumulative distribution function (CDF) of $E_H$. Next, we calculate the probability $\pi_\rho$ that no active primary receiver falls within the guard zone around a secondary transmitter.

\subsection{Energy Coverage}
We will now derive the energy coverage probability of a secondary transmitter. For this purpose, we will calculate the total harvested energy by taking into account which primary transmitters are active during the energy harvesting period i.e., the transmitters that initiated a transmission before the secondary transmitter switched to the awake mode, along with the transmitters that switched to active during the harvesting period.
Consider a secondary transmitter located at the origin during the interval $[0,T_E]$ i.e., during the awake phase. For the derivation of $\pi(\epsilon)$, we will use the characteristic function of the total harvested energy $E_H$, given in the following lemma.
\begin{lemma}\label{characteristic}
The characteristic function of the random variable $E_H$ is given by
\begin{align}
\mathcal{F}(z) = \exp\left(2\pi\lambda_1\int_{-T_I}^{T_E} \int_0^\infty \frac{\imath z P_1 \psi(t) u}{1 + u^\alpha - \imath z P_1 \psi(t)}\,du\,dt\right),
\end{align}
where $\psi(t)$ is given in Appendix \ref{app1} by \eqref{psi1}, \eqref{psi2} and \eqref{psi3}.
\end{lemma}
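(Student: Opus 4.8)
The plan is to recognize $E_H$ as a linear functional of the marked TS-PPP $\Phi_1$ (with the fading coefficients $h_x$ as marks) and to evaluate its characteristic function through the probability generating functional (PGFL). First I would interchange the time integral with the sum over the process in $E_H=\int_0^{T_E}y(t)\,dt$, which is justified by nonnegativity of the integrand in \eqref{yt} (Tonelli), to obtain
\[
E_H = P_1\!\!\sum_{(x,t_x)\in\Phi_1}\!\! h_x(1+r_x^\alpha)^{-1}\,\psi(t_x),\qquad \psi(t_x)\triangleq\int_0^{T_E}\phi(t,t_x)\,dt.
\]
Thus $\psi(t_x)$ is precisely the length of the overlap $[0,T_E]\cap[t_x,t_x+T_I]$, i.e.\ the amount of time the interferer $(x,t_x)$ is active inside the harvesting window.

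Next I would make $\psi$ explicit. The overlap is empty unless $t_x+T_I>0$ and $t_x<T_E$, so only marks with $t_x\in[-T_I,T_E]$ contribute; this is what fixes the outer integration limits. Writing $\psi(t_x)=\min(T_E,t_x+T_I)-\max(0,t_x)$ and comparing $T_E$ against $T_I$ yields the piecewise-linear (trapezoidal) profile recorded as \eqref{psi1}--\eqref{psi3} in Appendix \ref{app1}. This case split is the main bookkeeping step, though each piece is elementary.

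With $E_H$ in this form I would compute $\mathcal{F}(z)=\mathbb{E}[e^{\imath z E_H}]=\mathbb{E}\prod_{(x,t_x)}\exp(\imath z P_1 h_x(1+r_x^\alpha)^{-1}\psi(t_x))$ by applying the PGFL of the homogeneous TS-PPP (intensity $\lambda_1\,dx\,dt_x$) together with independence of the marks:
\[
\mathcal{F}(z)=\exp\!\left(-\lambda_1\!\int_{\mathbb{R}^2}\!\int_{-T_I}^{T_E}\!\Big(1-\mathbb{E}_{h}\big[e^{\imath z P_1 h(1+r_x^\alpha)^{-1}\psi(t_x)}\big]\Big)\,dt_x\,dx\right).
\]
Since $h\sim\exp(1)$, the inner expectation is the characteristic function of an exponential, $\mathbb{E}_h[e^{\imath a h}]=(1-\imath a)^{-1}$ with $a=zP_1(1+r_x^\alpha)^{-1}\psi(t_x)$, so that $1-\mathbb{E}_h[\cdots]=-\imath a/(1-\imath a)$.

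Finally I would pass to polar coordinates, $\int_{\mathbb{R}^2}dx=2\pi\int_0^\infty u\,du$, and clear the factor $(1+u^\alpha)$ from the resulting fraction. The overall minus sign from the PGFL cancels with the minus in $-\imath a$, and relabeling $t_x$ as $t$ delivers exactly the claimed
\[
\mathcal{F}(z)=\exp\!\left(2\pi\lambda_1\!\int_{-T_I}^{T_E}\!\int_0^\infty\frac{\imath z P_1\psi(t)u}{1+u^\alpha-\imath z P_1\psi(t)}\,du\,dt\right).
\]
I expect no genuine convergence difficulty here: $\alpha>2$ makes $u/(1+u^\alpha)$ integrable at infinity, and the finite time window $[-T_I,T_E]$ keeps the temporal integral bounded.
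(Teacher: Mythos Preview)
Your proposal is correct and follows essentially the same route as the paper: swap the time integral with the sum to produce $\psi(t_x)$, average out the exponential fading marks, apply the PGFL of the TS-PPP, and pass to polar coordinates. Your version is slightly more careful (you invoke Tonelli for the interchange, give the overlap interpretation of $\psi$, and note why the $u$-integral converges), but there is no substantive difference in method.
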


\begin{proof}
See Appendix \ref{app1}.
\end{proof}

We can now provide the complementary CDF of $E_H$, denoted by $\pi(\epsilon)$, in the following proposition.
\begin{prop}\label{E CDF}
The probability that a secondary transmitter has harvested energy higher than a predefined threshold $\epsilon$ during a time period $T_E$ is given by
\begin{align}\label{eps}
\pi(\epsilon) = \frac{1}{2} + \frac{1}{\pi} \int_0^\infty \frac{\Im[\exp(-\imath z\epsilon) \mathcal{F}(z)]}{z}\,dz,
\end{align}
where $\mathcal{F}(z)$ is given by Lemma \ref{characteristic}.
\end{prop}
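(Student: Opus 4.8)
The plan is to recognize that $\pi(\epsilon) = \mathbb{P}(E_H > \epsilon)$ is exactly the complementary CDF (survival function) of the nonnegative random variable $E_H$, and to recover it from the characteristic function $\mathcal{F}(z) = \mathbb{E}[\exp(\imath z E_H)]$ supplied by Lemma \ref{characteristic}. The single tool that does all the work here is the Gil--Pelaez inversion theorem, which expresses the distribution function of a random variable at any continuity point directly in terms of its characteristic function; the rest of the argument is just taking a complement.

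First I would invoke the Gil--Pelaez formula: for a random variable with characteristic function $\phi$, the CDF at a continuity point $x$ equals
\begin{align}
F_{E_H}(x) = \frac{1}{2} - \frac{1}{\pi}\int_0^\infty \frac{\Im[\exp(-\imath z x)\,\phi(z)]}{z}\,dz,
\end{align}
where the integral is read as an improper Riemann integral. Specializing to $\phi = \mathcal{F}$ and $x = \epsilon$ gives $\mathbb{P}(E_H \leq \epsilon)$, and then taking the complement,
\begin{align}
\pi(\epsilon) = 1 - F_{E_H}(\epsilon) = \frac{1}{2} + \frac{1}{\pi}\int_0^\infty \frac{\Im[\exp(-\imath z\epsilon)\,\mathcal{F}(z)]}{z}\,dz,
\end{align}
which is precisely the claimed identity \eqref{eps}. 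No further manipulation of the double integral defining $\mathcal{F}(z)$ is needed, since Lemma \ref{characteristic} is assumed.

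The main obstacle is not computation but justifying the hypotheses of the inversion theorem, and in particular confirming that $\epsilon$ is a continuity point of $F_{E_H}$ so that the survival function coincides with $1 - F_{E_H}(\epsilon)$ without any atom correction. I would argue this by noting that $E_H$ is the time-integral of the nonnegative received-power process $y(t)$ in \eqref{yt}, that $\mathcal{F}(z)$ is the (continuous, $\mathcal{F}(0)=1$) Laplace-type functional of the primary process $\Phi_1$, and that the Rayleigh fading coefficients $h_x$ smooth the distribution of $E_H$ so that it is absolutely continuous on $(0,\infty)$ for any $\epsilon>0$. Under these conditions the improper integral converges and the Gil--Pelaez inversion applies verbatim, which completes the proof.
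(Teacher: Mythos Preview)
Your proof is correct and follows essentially the same route as the paper: both simply apply the Gil--Pelaez inversion theorem to obtain $F_{E_H}(\epsilon)$ from the characteristic function $\mathcal{F}(z)$ of Lemma~\ref{characteristic}, and then take the complement. The only difference is that you add a brief justification for $\epsilon$ being a continuity point of $F_{E_H}$, which the paper leaves implicit.
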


\begin{proof}
The complementary CDF of $E_H$ is evaluated by
\begin{align}
\pi(\epsilon) &= \mathbb{P}(E_H > \epsilon) = 1 - \mathbb{P}(E_H < \epsilon)\nonumber\\
&= 1 - \left(\frac{1}{2} - \frac{1}{\pi} \int_0^\infty \frac{\Im[\exp(-\imath z \epsilon) \mathcal{F}(z)]}{z}\,dz\right),
\end{align}
which follows by applying the Gil-Pelaez Theorem \cite{Gil Pelaez}, where $\mathcal{F}(z)$ is the characteristic function of $E_H$ given by Lemma \ref{characteristic}, and the result follows.
\end{proof}
It is worth mentioning that, when the transmit power of the primary transmitters is very small i.e., $P_1\to0$ then $\pi(\epsilon)\to 0$. This follows from $\lim_{P_1\to 0} \mathcal{F}(z)=1$ and therefore
	 \begin{align}\label{pi/2}
	 \lim_{P_1\to\infty}& \int_0^\infty \frac{\Im[\exp(-\imath z \epsilon) \mathcal{F}(z)]}{z}\,dz= \int_{0}^{\infty} \frac{\Im[\exp(-\imath z \epsilon)]}{z}\,dz\nonumber\\
	 & = \int_0^\infty -\frac{ \sin(z \epsilon)}{z }\,dz=-\frac{\pi}{2}.
	 \end{align}
	The above is also true for $\lambda_1\to0$. On the other hand, as the transmit power of the primary transmitters increases i.e., $P_1\to\infty$, then $\pi(\epsilon)\to1$. This follows from 
	 \begin{equation}
	\lim_{P_1\to\infty} \int_0^\infty \frac{\Im[\exp(-\imath z \epsilon) \mathcal{F}(z)]}{z}\,dz=\frac{\pi}{2},
	\end{equation}
	which was validated numerically and is also valid for $\lambda_1\to\infty$.  As such, with higher transmit power or more transmitters available, a larger amount of energy can be absorbed by the secondary transmitters.
Next, we provide the probability $\pi_\rho$ that a primary receiver is not located inside a secondary transmitter's guard zone.

\subsection{Guard Zone}
In this section, we will obtain the probability that no active primary receiver is located within a secondary transmitter's guard zone. A secondary transmitter, by the end of the energy harvesting period, returns to the idle mode if there is not sufficient energy, otherwise it decides whether to initiate information transmission to its paired receiver based on the locations of the active primary receivers. If an active primary receiver is located within a distance higher than $\rho$ from the secondary transmitter then the transmission begins. Note that, the decision is considered to be taken instantaneously right after the energy harvesting period. That is, the decision takes into account the active primary pairs which have already initiated their transmissions by the end of the harvesting period.

Specifically, a secondary transmitter $(y,t_y)$ senses the active pairs which initiated their transmissions during the time interval $[t_y-T_I,t_y]$. As we consider a finite time period of network observation i.e., a unit time window, the initiation of a transmission is uniformly distributed over time. Therefore, the probability that a primary pair is active at time $t_y$ is simply $t_y-(t_y-T_I)=T_I$. As such, the primary transmitters which are active at time $t_y$ form a thinned version of TS-PPP $\Phi_1$ with density $\lambda_1T_I$. Now by considering the protection area around the secondary transmitter, we provide the probability $\pi_\rho$ in the following proposition.

\begin{prop}\label{rho}
The probability that no active primary receiver is located within a secondary transmitter's guard zone is given by
\begin{equation}
\pi_\rho = \exp(-\lambda_1 T_I \pi \rho^2), \label{rho Eq}
\end{equation}
where $\rho$ is the guard zone's radius.
\end{prop}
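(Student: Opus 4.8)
The plan is to compute $\pi_\rho$ as a void probability of a planar point process. The key observation, already established in the preamble to the proposition, is that the active primary receivers seen by a secondary transmitter $(y,t_y)$ form a thinned version of the spatial projection of $\Phi_1$, with effective spatial density $\lambda_1 T_I$. This reduces the problem from the time-space process to an ordinary homogeneous PPP on $\mathbb{R}^2$ with intensity $\lambda_1 T_I$.

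First I would fix the typical secondary transmitter at the origin (by the stationarity of the PPP this entails no loss of generality) and identify the guard zone with the disk $b(0,\rho)$ of radius $\rho$ centred at it. The event that the secondary transmitter becomes active is precisely the event that this disk contains no active primary receiver, i.e.\ that the thinned process places no point inside $b(0,\rho)$. Thus $\pi_\rho$ is the void probability of a homogeneous PPP of intensity $\lambda_1 T_I$ evaluated on $b(0,\rho)$.

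Next I would invoke the standard void-probability formula for a homogeneous PPP: for a Poisson process of intensity $\mu$ on $\mathbb{R}^2$, the probability of finding no point in a region $A$ is $\exp(-\mu\,|A|)$, where $|A|$ is the Lebesgue measure of $A$. Here $\mu = \lambda_1 T_I$ and $|A| = |b(0,\rho)| = \pi\rho^2$, so that
\begin{equation}
\pi_\rho = \exp\left(-\lambda_1 T_I\,\pi\rho^2\right),
\end{equation}
which is exactly \eqref{rho Eq}.

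The subtle point — and the only real obstacle — is justifying that the relevant receiver process is indeed Poisson with intensity $\lambda_1 T_I$, so that the void-probability formula applies cleanly. This is where the paired-receiver geometry and the temporal thinning must be handled carefully: one must argue that displacing each primary transmitter to its paired receiver preserves the PPP property (by the displacement theorem, since the displacement $d$ is deterministic and common to all pairs, the receiver process is again a homogeneous PPP of the same spatial density), and that restricting to transmissions active at time $t_y$ independently retains each point with probability $T_I$, yielding the thinned intensity $\lambda_1 T_I$. Once this reduction is in place, the remaining computation is the routine evaluation of a disk's area, and the result follows immediately.
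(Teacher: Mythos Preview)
Your proposal is correct and follows essentially the same approach as the paper: both invoke the displacement theorem to argue that the primary receivers form a homogeneous PPP of the same density as the transmitters, thin by $T_I$ to obtain the active-receiver density $\lambda_1 T_I$, and then compute the void probability on the disk of radius $\rho$. The only cosmetic difference is that the paper writes out the Poisson point count distribution explicitly and evaluates it at $n=0$, whereas you invoke the void-probability formula $\exp(-\mu|A|)$ directly.
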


\begin{proof}
Recall that each receiver is separated from its paired transmitter by a common distance $d$. By leveraging the homogeneity of the PPP and applying the displacement theorem \cite{HAE}, we can consider that the primary receivers are also distributed according to a PPP of density $\lambda_1$. Thus, the density of the active primary receivers is $\lambda_1T_I$. Now, let $\mathcal{N}(A)$ denote the number of active primary receivers in the area $A$. Then, the probability that $n$ receivers are located within a guard zone area of area $\pi\rho^2$ is a Poisson random variable with parameter $\lambda_1 T_I \pi \rho^2$ and is evaluated by
\begin{equation}
\mathbb{P}\{\mathcal{N}(\pi\rho^2) = n\} = \frac{(\lambda_1 T_I \pi \rho^2)^n \exp(-\lambda_1 T_I \pi \rho^2)}{n!}.
\end{equation}
The result follows by considering the void probability i.e., $n = 0$.
\end{proof}
From (\ref{rho Eq}), it follows that if $\lambda_1\to \infty$, then $\pi_{\rho}\to 0$, therefore the employment of a very dense primary network does not permit the coexistence of a cognitive secondary network since the likelihood of an empty guard zone becomes zero. Similarly, if $\rho\to \infty$, then $\pi_{\rho}\to 0$ which implies that a large guard zone protects the primary network by completely blocking the secondary network. On the other hand, if $\lambda_1\to 0$ then $\pi_{\rho}\to 1$ i.e., since no primary pairs are employed, the guard zone is always empty. Finally, with $\rho\to 0$ the cognition capability of the secondary network is completely ignored which results in $\pi_{\rho}\to 1$.

\subsection{Active Secondary Transmitters}
In order to study the interaction between primary and secondary pairs, we need to investigate the asynchronous channel access by both primary and secondary transmitters. According to the network model, a secondary transmitter has channel access if enough energy has been harvested during $T_E$ and if no active primary receiver is located within a secondary transmitter's guard zone. Therefore, the transmit probability i.e., the probability that a secondary transmitter initiates a transmission is given by
\begin{equation}
\pi_s = \pi(\epsilon) \pi_\rho,
\end{equation}
where $\pi(\epsilon)$ and $\pi_\rho$ are provided in Propositions $\ref{E CDF}$ and $\ref{rho}$, respectively. From the above discussion regarding the behaviour of $\pi(\epsilon)$ and $\pi_{\rho}$, it follows that the likelihood of a secondary transmitter to access the channel experiences a trade-off regarding the density of the primary transmitters. Specifically, the employment of a very dense primary network implies that the energy threshold is satisfied at the cost of a non-empty guard zone. On the other hand, with a low primary transmitters density, exceeding the energy threshold is harder while the empty guard zone requirement has higher likelihood to be achieved. The same behaviour is valid for the effect of $T_I$ on the transmit probability. Furthermore, we provide the following proposition regarding the average transmit power of the secondary transmitters.

\begin{prop}\label{sec_power}
An active secondary transmitter transmits with average power
\begin{align}
P_2 = \frac{2 \pi^2 \lambda_1 T_E P_1}{\alpha} \csc\left(\frac{2\pi}{\alpha}\right) (\pi(\epsilon)-\pi(\mathcal{E})) + \frac{\mathcal{E}}{T_I} \pi(\mathcal{E}),
\end{align}
where $\pi(\mathcal{E})$ denotes the probability that the total harvested energy $E_H$ exceeds the saturation level $\mathcal{E}$ and is given in Proposition \ref{E CDF}.
\end{prop}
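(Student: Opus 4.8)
The plan is to start directly from the definition $P_2 = \mathbb{E}[p]$ in \eqref{transmit power} and expand the expectation over the three regimes of the piecewise power law \eqref{power}. Since the regime $E_H < \epsilon$ contributes zero power, I would write
\[
P_2 = \frac{\mathbb{E}[E_H]}{T_I}\,\mathbb{P}(\epsilon \le E_H < \mathcal{E}) + \frac{\mathcal{E}}{T_I}\,\mathbb{P}(E_H \ge \mathcal{E}).
\]
The two probabilities follow immediately from the complementary CDF $\pi(\cdot)$ of Proposition~\ref{E CDF}: since $\pi(x) = \mathbb{P}(E_H > x)$, we have $\mathbb{P}(\epsilon \le E_H < \mathcal{E}) = \pi(\epsilon) - \pi(\mathcal{E})$ and $\mathbb{P}(E_H \ge \mathcal{E}) = \pi(\mathcal{E})$. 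Substituting these reproduces the structural form of the claim, so it remains only to evaluate the mean energy $\mathbb{E}[E_H]$ and to verify that $\mathbb{E}[E_H]/T_I$ equals the cosecant coefficient multiplying $(\pi(\epsilon)-\pi(\mathcal{E}))$.

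The core computation is therefore $\mathbb{E}[E_H]$. Interchanging expectation with the time integral gives $\mathbb{E}[E_H] = \int_0^{T_E}\mathbb{E}[y(t)]\,dt$, and I would apply Campbell's theorem to the linear functional of the TS-PPP $\Phi_1$ appearing in \eqref{yt}. Using $\mathbb{E}[h_x]=1$ together with the product intensity $\lambda_1\,dx\,dt_x$, the time integral of the indicator $\phi(t,t_x)$ collapses to the active-window length, $\int_{\mathbb{R}}\mathbf{1}(t-T_I \le t_x \le t)\,dt_x = T_I$, which renders $\mathbb{E}[y(t)]$ independent of $t$ (consistent with the active density $\lambda_1 T_I$ already used in Proposition~\ref{rho}). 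The remaining spatial contribution is $\int_{\mathbb{R}^2}(1+|x|^\alpha)^{-1}\,dx = 2\pi\int_0^\infty r(1+r^\alpha)^{-1}\,dr$.

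The one step that does real work is this radial integral. The substitution $u = r^\alpha$ turns it into the Beta-type integral $\tfrac{1}{\alpha}\int_0^\infty u^{2/\alpha - 1}(1+u)^{-1}\,du$, which by the standard identity $\int_0^\infty u^{s-1}(1+u)^{-1}\,du = \pi\csc(\pi s)$ — valid here because $\alpha>2$ forces $0 < 2/\alpha < 1$ — evaluates to $\tfrac{\pi}{\alpha}\csc(2\pi/\alpha)$. Collecting factors yields $\mathbb{E}[y(t)] = P_1\lambda_1 T_I\,\tfrac{2\pi^2}{\alpha}\csc(2\pi/\alpha)$, and integrating this constant over $[0,T_E]$ gives $\mathbb{E}[E_H] = \tfrac{2\pi^2\lambda_1 T_E T_I P_1}{\alpha}\csc(2\pi/\alpha)$; dividing by $T_I$ delivers exactly the coefficient in the proposition. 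The main obstacle is not conceptual but rather the convergence and closed-form evaluation of the radial integral, where the hypothesis $\alpha>2$ is precisely what guarantees integrability at infinity and the applicability of the cosecant formula.
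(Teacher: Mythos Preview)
Your proposal is correct and follows essentially the same route as the paper's proof in Appendix~\ref{E EH}: the same decomposition of $\mathbb{E}[p]$ over the three regimes of \eqref{power}, the same reduction of $\mathbb{E}[E_H]$ via Campbell's theorem and $\mathbb{E}[h_x]=1$, and the same substitution $u=r^\alpha$ to evaluate the radial integral (the paper cites \cite[3.194.4]{GRAD} for the cosecant identity you wrote out explicitly). Your explicit observation that the time-integral of the indicator yields the factor $T_I$, and that convergence of the spatial integral hinges on $\alpha>2$, matches the paper's computation in \eqref{mean} step for step.
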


\begin{proof}
See Appendix \ref{E EH}.
\end{proof}
It can be observed that, in order to increase the transmit power $P_2$, one has to increase the harvesting or information transmission duration so that to absorb more power from the primary transmitters. This is because by increasing $T_I$ or $T_E$, a primary transmitter transmits or a secondary receiver harvests energy for a longer time period, respectively. However, for achieving higher $P_2$ an increase in $T_E$ instead of $T_I$ would be preferable since the transmit power $P_2$ is proportional to $1/T_I$. This is clear from the asymptotic scenario $P_1\to\infty$. As discussed before, when $P_1\to\infty$, then $\pi(\epsilon)=\pi(\mathcal{E})=1$ and therefore the transmit power of the secondary transmitters becomes constant at
	\begin{equation}
		\lim_{P_1\to\infty}P_2=\frac{\mathcal{E}}{T_I}.
	\end{equation}
	This is expected since the rectification circuit at the secondary transmitters has a saturation level at $\mathcal{E}$ which limits the harvested energy and consequently the transmit power of the secondary transmitters.
  
\section{Performance Evaluation}\label{analysis}
We now turn our attention on deriving the main analytical expressions for the performance evaluation of our considered network, namely, the spatial throughput and the meta distribution.

\subsection{Coverage Probability \& Spatial Throughput}
The spatial throughput is defined as the information successfully transmitted per unit time and unit area. As such, we first need to derive the coverage probability i.e., the probability that the received SINR is higher than a predefined threshold $\zeta$. We denote by $p_1^c$ and $p_2^c$, the information coverage probability for the typical primary and secondary receiver, respectively. We consider a typical primary receiver and derive the coverage probability $p_1^c = \mathbb{P}(\text{SINR}_1 \geq \zeta)$ in the following proposition.

\begin{theorem}\label{thm1}
The information coverage probability of the typical primary receiver for a threshold $\zeta$, is given by
\begin{align}\label{coverP1}
p_1^c(\zeta) = \mathcal{L}_{I_1}(s) \mathcal{L}_{I_2}(s) \exp(-\sigma^2 s),
\end{align}
where $s = \frac{\zeta (1+d^{\alpha})}{P_1}$, $\mathcal{L}_{I_1}(s)$ and $\mathcal{L}_{I_2}(s)$ denote the Laplace transform of the interference from the primary and secondary transmitters, respectively, and are given by
\begin{align}\label{laplace}
\mathcal{L}_{I_{n}}(s) = \exp\Bigg(&-2 \lambda_n \pi^2 T_I \csc\left(\frac{2\pi}{\alpha}\right)\nonumber\\
&\times \frac{(1+P_n s)^{2/\alpha}(2 P_n s-\alpha)+\alpha}{P_n s (2+\alpha)}\Bigg),
\end{align}
where $n \in \{1,2\}$, $P_1$ is the transmit power of the primary transmitters and $P_2$ is the average transmit power of the secondary transmitters given in Proposition \ref{sec_power}.
\end{theorem}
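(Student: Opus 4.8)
The plan is to follow the usual Rayleigh-fading coverage argument, adapted to the time-averaged interference of the asynchronous model. First I would rewrite the coverage event. Since the desired-link fading satisfies $h_0\sim\exp(1)$, the event $\mathrm{SINR}_1\geq\zeta$ in \eqref{sir1} is equivalent to $h_0\geq s(\mathrm{I}_1+\mathrm{I}_2+\sigma^2)$ with $s=\zeta(1+d^\alpha)/P_1$. Conditioning on the aggregate interference and using the complementary CDF $\mathbb{P}(h_0\geq x)=e^{-x}$ of the unit-mean exponential gives
\begin{align}
p_1^c(\zeta)=\mathbb{E}\left[\exp\left(-s(\mathrm{I}_1+\mathrm{I}_2+\sigma^2)\right)\right].
\end{align}
The noise term is deterministic, and $\mathrm{I}_1$, $\mathrm{I}_2$ are generated by $\Phi_1$ and the approximating active process $\Phi_2^a$, which is treated as independent of $\Phi_1$; hence the expectation factorises as $e^{-\sigma^2 s}\mathcal{L}_{I_1}(s)\mathcal{L}_{I_2}(s)$, which is exactly the product form \eqref{coverP1}. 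This reduces the theorem to computing the two Laplace transforms.

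For $\mathcal{L}_{I_n}(s)=\mathbb{E}[e^{-s\mathrm{I}_n}]$ I would first perform the time-averaging in \eqref{timeaveragedI1}. Exchanging the integral with the sum over points, each point $(x,t_x)$ picks up a temporal weight $\int_0^{T_I}\phi(t,t_x)\,dt$. Evaluating this overlap of $[t_x,t_x+T_I]$ with $[0,T_I]$ yields the triangular kernel $(T_I-|t_x|)^+$ supported on $t_x\in[-T_I,T_I]$; this is the step where the asynchronous structure genuinely enters, and it must be computed carefully. Treating $\mathrm{I}_n$ as a marked sum over the space-time PPP and applying its probability generating functional then gives
\begin{align}
\mathcal{L}_{I_n}(s)=\exp\left(-\lambda_n\int_{\mathbb{R}^2}\int_{-T_I}^{T_I}\left(1-\mathbb{E}_h\left[e^{-\frac{sP_n}{T_I}(T_I-|t_x|)h(1+r_x^\alpha)^{-1}}\right]\right)dt_x\,dx\right),
\end{align}
and averaging over the $\exp(1)$ fading mark replaces the inner expectation by $\bigl(1+\tfrac{sP_n}{T_I}(T_I-|t_x|)(1+r_x^\alpha)^{-1}\bigr)^{-1}$.

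It then remains to evaluate the double integral in closed form. Switching to polar coordinates and using the standard identity $\int_0^\infty r(b+r^\alpha)^{-1}\,dr=\tfrac{\pi}{\alpha}\csc(\tfrac{2\pi}{\alpha})b^{2/\alpha-1}$, valid because $\alpha>2$, collapses the radial integral and leaves a single integral over $t_x$. Exploiting the symmetry of the triangular kernel and substituting $v=\tfrac{sP_n}{T_I}(T_I-t_x)$ reduces this to $\tfrac{2T_I}{sP_n}\int_0^{sP_n}v(1+v)^{2/\alpha-1}\,dv$. I expect the only real obstacle to be algebraic bookkeeping: after the substitution $w=1+v$ the integrand splits as $w^{2/\alpha}-w^{2/\alpha-1}$, and the resulting power terms, placed over the common denominator $2(2+\alpha)$, must be regrouped into precisely the numerator $(1+P_n s)^{2/\alpha}(2P_n s-\alpha)+\alpha$ of \eqref{laplace}. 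Collecting this with the prefactors $2\pi\lambda_n$, $\tfrac{\pi}{\alpha}\csc(\tfrac{2\pi}{\alpha})$ and $\tfrac{2T_I}{sP_n}$ produces the claimed expression, where for $n=2$ the density is understood to be the active density $\lambda_2^a=\lambda_2\pi_s$.
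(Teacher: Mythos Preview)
Your proposal is correct and follows essentially the same approach as the paper: exponential complementary CDF to reduce to a product of Laplace transforms, the triangular temporal kernel $\chi(t_x)=(T_I-|t_x|)/T_I$ on $[-T_I,T_I]$, MGF of the fading, PGFL of the time--space PPP, then the radial integral via the $\csc$ identity followed by the elementary time integral. In fact you spell out the final substitution $w=1+v$ and the regrouping of the numerator more explicitly than the paper, which simply declares that last integral ``straightforward''; your remark that for $n=2$ the density must be read as $\lambda_2^a=\lambda_2\pi_s$ is also consistent with the paper's modelling assumption on $\Phi_2^a$.
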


\begin{proof}
See Appendix \ref{app2}.
\end{proof}
It is worth mentioning that, in the case of $P_1\to\infty$, the Laplace transforms of the interference from primary and secondary transmitters reduce to
	\begin{align}
	&\lim_{P_1\to\infty}	\mathcal{L}_{I_1}=\exp\Bigg(-\frac{2 \pi ^2 \lambda_1 T_I \csc \left(\frac{2 \pi }{\alpha}\right)}{(\alpha+2) \zeta \left(d^\alpha+1\right)}\nonumber\\
	&\times  \left( \left(\zeta d^\alpha+\zeta+1\right)^{2/\alpha}\left(2 \zeta \left(d^\alpha+1\right)-\alpha\right) +\alpha\right)\Bigg),
	\end{align}
	and
	\begin{equation}
		\lim_{P_1\to\infty}	\mathcal{L}_{I_2}=1,
	\end{equation}
	respectively. As such, the coverage probability of the primary transmitter is independent of the secondary network. Furthermore, for a given threshold $\zeta$, the coverage probability is inversely proportional to the density $\lambda_1$ and the path-loss parameters $\alpha$ and $d$.
We now consider the typical secondary receiver in order to derive the coverage probability $p_2^c$. Note that, we consider that the typical pair at time zero is in active mode, that is, enough energy has been harvested and there is no active primary receiver in the secondary transmitter's guard zone.
\begin{prop}\label{coverP2}
The information coverage probability of the typical primary receiver for a threshold $\zeta$, is given by
\begin{align}
p_2^c(\zeta) = \mathcal{L}_{I_1}(s) \mathcal{L}_{I_2}(s) \exp(-\sigma^2 s),
\end{align}
where $s = \frac{\zeta (1+d^{\alpha})}{P_2}$, $\mathcal{L}_{I_1}(s)$ and $\mathcal{L}_{I_2}(s)$ are given by \eqref{laplace}.
\end{prop}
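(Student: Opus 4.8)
The plan is to follow essentially the same route as the proof of Theorem~\ref{thm1}, exploiting the fact that the only structural difference between $\text{SINR}_1$ and $\text{SINR}_2$ lies in the numerator, where the signal power $P_1$ is replaced by $P_2$; the denominator $\mathrm{I}_1+\mathrm{I}_2+\sigma^2$ is common to both receivers, since the aggregate interference is generated by the same primary field $\Phi_1$ and thinned secondary field $\Phi_2^a$ irrespective of the type of the typical link.

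First I would write the coverage probability directly from \eqref{sir2}:
\begin{align}
p_2^c(\zeta) &= \mathbb{P}\!\left(\frac{P_2 g_0 (1+d^\alpha)^{-1}}{\mathrm{I}_1+\mathrm{I}_2+\sigma^2} \geq \zeta\right) \nonumber\\
&= \mathbb{P}\!\left(g_0 \geq \frac{\zeta(1+d^\alpha)}{P_2}\big(\mathrm{I}_1+\mathrm{I}_2+\sigma^2\big)\right).
\end{align}
Since the typical secondary channel gain satisfies $g_0 \sim \exp(1)$ and is independent of the interference, I would condition on $\mathrm{I}_1$ and $\mathrm{I}_2$ and apply the complementary CDF of the exponential distribution, $\mathbb{P}(g_0 \geq a)=\exp(-a)$. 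Writing $s = \zeta(1+d^\alpha)/P_2$ and using the independence of $\mathrm{I}_1$ and $\mathrm{I}_2$ to split the expectation, this yields
\begin{align}
p_2^c(\zeta) = \exp(-\sigma^2 s)\,\mathbb{E}\!\left[\exp(-s\,\mathrm{I}_1)\right]\mathbb{E}\!\left[\exp(-s\,\mathrm{I}_2)\right].
\end{align}

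The two remaining expectations are, by definition, the Laplace transforms $\mathcal{L}_{I_1}(s)$ and $\mathcal{L}_{I_2}(s)$. The crucial observation is that $\mathrm{I}_1$ and $\mathrm{I}_2$ here are the very same random variables analysed in the proof of Theorem~\ref{thm1}, so their Laplace transforms are already available in closed form through \eqref{laplace} and need not be recomputed; only the argument changes, with $s$ now carrying $P_2$ in its denominator instead of $P_1$. Substituting \eqref{laplace} evaluated at $s=\zeta(1+d^\alpha)/P_2$ therefore gives the claimed expression immediately, showing that $p_2^c$ is obtained from $p_1^c$ purely by the replacement $P_1 \to P_2$ in $s$.

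I expect the only genuine subtlety — the algebra being inherited from Theorem~\ref{thm1} — to be the conditioning on the typical secondary pair being in the active phase. Activity requires both that sufficient energy was harvested ($E_H>\epsilon$) and that no active primary receiver lies inside the guard zone; because both events depend on the surrounding primary nodes, they are in principle correlated with the interference term $\mathrm{I}_1$. I would dispose of this exactly as the rest of the analysis does, via the tractability approximation of the system model: the active secondary transmitters are represented by the independently thinned process $\Phi_2^a$ of density $\lambda_2^a=\lambda_2\pi_s$, and the residual correlation between the typical pair's activity and $\Phi_1$ is neglected, so that the stationary interference statistics — and hence the transforms in \eqref{laplace} — apply verbatim. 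Under this approximation the derivation is identical to that of the primary receiver up to the substitution $P_1\to P_2$, and the proof closes.
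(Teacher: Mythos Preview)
Your proposal is correct and follows exactly the approach intended by the paper, which simply states that the proof is similar to the derivation of $p_1^c$ and omits the details. Your explicit identification of the substitution $P_1\to P_2$ in $s$, together with the tractability approximation handling the conditioning on the active phase, matches the paper's framework precisely.
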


\begin{proof}
The proof is similar to the derivation of $p_1^c$ and thus is omitted.
\end{proof}
We now focus on the coverage probability of a secondary receiver for the case where $P_1\to\infty$. In this case, the Laplace transform of the interference from the primary transmitters converges to zero i.e., $\lim_{P_1\to\infty} \mathcal{L}_{I_1}(s)=0$ and therefore imposes zero coverage for the secondary receivers.
	
Provided with the coverage probabilities, we can now examine the performance of the asynchronous networks by evaluating each network's spatial throughput which expresses the amount of information which has been successfully transmitted over a time period $T_I$ with a given rate $R$ over a given area and transmitters' density. Therefore, similar to \cite{asynchronous}, the spatial throughput of the primary and secondary networks for a given threshold $\zeta$ are given by
\begin{align}
\mathcal{T}_1 = T_I R \lambda_1 p_1^c(\zeta) ~ \text{(bpcu)},
\end{align}
and
\begin{align}
&\mathcal{T}_2 = T_I R \lambda_2^a p_2^c(\zeta) ~ \text{(bpcu)},
\end{align}
where $p_1^c(\zeta)$ and $p_2^c(\zeta)$ are given by Theorem \ref{thm1} and Proposition \ref{coverP2}, respectively.

\subsection{Meta Distribution}
In what follows, we derive the SINR's meta distribution to obtain the fraction of users that achieve a certain information coverage probability. Mathematically, this is defined as the distribution of the conditional coverage probability for an SINR threshold $\zeta$ over a given TS-PPP. For the derivation of the meta distribution we will use the reduced Palm probability $\mathbb{P}^!_o(\cdot)$, which is the conditional probability of an event given that the typical receiver exists at a specific location i.e., the typical link is active \cite{HAE}. Therefore, the meta distribution of the SINR can be expressed as \cite{HAE2}
\begin{equation}
F_n(x) = \mathbb{P}^!_o(q_n^c(\zeta) > x),
\end{equation}
where $x \in [0,1]$ and $q_n^c(\zeta)$ is the conditional coverage probability for a given threshold $\zeta$ of the typical primary/secondary receiver, where $n \in \{1,2\}$ i.e., $q_n^c(\zeta) = \mathbb{P}(\text{SINR}_n > \zeta ~\big|~ \Phi_n)$. Since $\Phi_n$ is given, $q_n^c(\zeta)$ can be written as
\begin{align}\label{qfun}
&q_n^c(\zeta) = \left[\prod_{(x,t_x) \in \Phi_1} \frac{1}{1 + s P_1 \chi(t_x) (1+r_x^{\alpha})^{-1}} \right]\nonumber\\ &\times\left[\prod_{(y,t_y) \in \Phi_2^a} \frac{1}{1 + s P_2 \chi(t_y) (1+w_y^{\alpha})^{-1}} \right] \exp(-\sigma^2 s),
\end{align}
which follows similarly to the proof of $p_n^c(\zeta)$ given in Appendix \ref{app2} and where $s = \frac{\zeta (1+d^{\alpha})}{P_1}$ for $n = 1$ and $s = \frac{\zeta (1+d^{\alpha})}{P_2}$ for $n = 2$.

As in \cite{HAE2}, the derivation of the SINR's meta distribution $F_n(x)$, requires a second order moment matching technique as to express the conditional coverage probability $q_n^c$ as a Beta random variable. This technique is expressed in following lemma.
\begin{lemma}\label{beta}
A random variable $X$ defined in the interval $[0,1]$, with first and second moments $\mathbb{E}\left[X\right]$ and $\mathbb{E}\left[X^2\right]$, can be represented as a Beta random variable $B(\gamma, \delta)$ with shape parameters given by
\begin{align}
\gamma = \frac{\mathbb{E}[X] \mathbb{E}[X^2] - \mathbb{E}[X]^2}{\mathbb{E}[X]^2 - \mathbb{E}[X^2]},
  \end{align}
\begin{align}
\delta = \frac{(1 - \mathbb{E}[X]) (\mathbb{E}[X^2] - \mathbb{E}[X])}{\mathbb{E}[X]^2-\mathbb{E}[X^2]}.
  \end{align}
\end{lemma}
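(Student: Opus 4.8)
The plan is to prove this by the method of moments, matching the first two moments of a Beta$(\gamma,\delta)$ law to the prescribed $\mathbb{E}[X]$ and $\mathbb{E}[X^2]$. First I would recall the standard moment formulas for the Beta distribution: if $X\sim B(\gamma,\delta)$, then $\mathbb{E}[X]=\gamma/(\gamma+\delta)$ and $\mathbb{E}[X^2]=\gamma(\gamma+1)/\big((\gamma+\delta)(\gamma+\delta+1)\big)$. Equating these to the given moments produces a system of two equations in the two unknowns $\gamma$ and $\delta$, and the lemma follows by solving this system explicitly and rewriting the solution in the quoted form.

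The key simplification I would use is to introduce the auxiliary quantity $\nu\triangleq\gamma+\delta$. The mean equation then immediately gives $\gamma=\mathbb{E}[X]\,\nu$ and hence $\delta=(1-\mathbb{E}[X])\,\nu$, so that both parameters are carried by the single unknown $\nu$. Substituting $\gamma=\mathbb{E}[X]\,\nu$ into the second-moment equation yields $\mathbb{E}[X^2]=\mathbb{E}[X]\big(\mathbb{E}[X]\,\nu+1\big)/(\nu+1)$, which becomes linear in $\nu$ once the denominator is cleared; solving it gives $\nu=(\mathbb{E}[X]-\mathbb{E}[X^2])/(\mathbb{E}[X^2]-\mathbb{E}[X]^2)$. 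Back-substituting this value into $\gamma=\mathbb{E}[X]\,\nu$ and $\delta=(1-\mathbb{E}[X])\,\nu$ and simplifying the fractions then reproduces exactly the two stated expressions for the shape parameters.

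I do not expect any genuine obstacle, as the argument is elementary algebra; the only care needed is in tracking the sign of the common denominator $\mathbb{E}[X]^2-\mathbb{E}[X^2]$, which equals the negative of the variance and is therefore negative, so that the final formulas emerge in the normalized form written in the statement. It is also worth noting, to keep the scope honest, that the claim is a moment-matching \emph{representation} rather than an exact distributional identity: it does not assert that $X$ is truly Beta distributed, only that a Beta law with these parameters matches its first two moments, which is precisely what the subsequent meta-distribution computation in the spirit of \cite{HAE2} requires. Positivity of the resulting $\gamma$ and $\delta$ is automatic, since for a $[0,1]$-valued $X$ one has $\mathbb{E}[X^2]\le\mathbb{E}[X]$, guaranteeing $\nu\ge 0$ and hence admissible shape parameters.
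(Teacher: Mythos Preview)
Your moment-matching derivation is correct and is exactly the natural way to establish the formulas: set $\nu=\gamma+\delta$, use the mean to get $\gamma=\mathbb{E}[X]\nu$, substitute into the second-moment identity to solve the resulting linear equation for $\nu$, and back-substitute. The paper itself does not prove this lemma at all; it simply states it as the standard second-order moment-matching device borrowed from \cite{HAE2} and immediately proceeds to compute the required moments of $q_n^c(\zeta)$. So your proposal is not so much a different route as a proof where the paper offers none, and your closing caveat---that this is a moment-matching approximation rather than a distributional identity---is precisely the sense in which the paper uses it.
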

Therefore, we require the first and second moment of $q_n^c(\zeta)$ i.e., $\mathbb{E}[q_n^c(\zeta)]$ and $\mathbb{E}[q_n^c(\zeta)^2]$. Note that, the first moment of $q_n^c(\zeta)$ is given directly by the coverage probability of $p_n^c(\zeta)$ in Theorem \ref{thm1} and Proposition \ref{coverP2}. As such, we only need to evaluate the second moment of $q_n^c$, provided below.
\begin{corollary}
The second moment of $q_n^c$, $\mathbb{E}[q_n^c(\zeta)^2]$, is
\begin{align}
\mathbb{E}[q_n^c(\zeta)^2] = \mathcal{L}^2_{I_2}(s)\mathcal{L}^2_{I_2}(s)\exp(-2\sigma^2 s),
\end{align}
where
\begin{align}
&\mathcal{L}^2_{I_n}(s) = \exp\Bigg(-4\pi\lambda_n\nonumber\\
&\times\! \int_0^{T_I}\!\!\! \int_0^\infty\! \left(1-\left(\frac{1}{1 + s P_n \chi(t) (1+u^{\alpha})^{-1}}\right)^2\right)u\, du\,dt\Bigg),
\end{align}
where $s$ is defined as above.
\end{corollary}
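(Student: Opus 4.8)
The plan is to compute $\mathbb{E}[q_n^c(\zeta)^2]$ directly from the explicit product representation \eqref{qfun}, exploiting the fact that $q_n^c(\zeta)$ is already a functional of the two \emph{independent} point processes $\Phi_1$ and $\Phi_2^a$ with the desired-link and interferer fading coefficients averaged out. First I would square \eqref{qfun}: the deterministic noise factor $\exp(-\sigma^2 s)$ becomes $\exp(-2\sigma^2 s)$, while each per-node term in the two products is replaced by its square, giving
\begin{align}
q_n^c(\zeta)^2 &= \Bigg[\prod_{(x,t_x)\in\Phi_1}\Big(\frac{1}{1+sP_1\chi(t_x)(1+r_x^\alpha)^{-1}}\Big)^{2}\Bigg]\nonumber\\
&\times\Bigg[\prod_{(y,t_y)\in\Phi_2^a}\Big(\frac{1}{1+sP_2\chi(t_y)(1+w_y^\alpha)^{-1}}\Big)^{2}\Bigg]\nonumber\\
&\times\exp(-2\sigma^2 s).
\end{align}

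Since $\Phi_1$ and $\Phi_2^a$ are independent, the expectation factorizes into a product of two expectations over the respective TS-PPPs, each of which is a probability generating functional (PGFL). Applying the PGFL of a homogeneous TS-PPP of density $\lambda_n$ to the squared per-node function $f_n^2$, with $f_n=(1+sP_n\chi(t)(1+u^\alpha)^{-1})^{-1}$, yields the exponential of $-\lambda_n$ times the time-space integral of $1-f_n^2$. This is precisely the structure arising in the first-moment computation of Theorem \ref{thm1} (Appendix \ref{app2}), the only change being that the integrand $1-f_n$ is replaced by $1-f_n^2$; consequently $\mathbb{E}[q_n^c(\zeta)^2]=\mathcal{L}^2_{I_1}(s)\mathcal{L}^2_{I_2}(s)\exp(-2\sigma^2 s)$ with $\mathcal{L}^2_{I_n}(s)$ as claimed.

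The remaining steps are the spatial and temporal reductions. I would pass to polar coordinates, $dx=2\pi u\,du$, which produces the angular factor $2\pi$ and identifies $u$ with the interferer distance. The temporal integration runs over the times $t\in[-T_I,T_I]$ for which an interferer's activity window can overlap the typical observation slot $[0,T_I]$, where $\chi(t)$ is the time-averaged activity fraction inherited from \eqref{timeaveragedI1}--\eqref{timeaveragedI2}; using that $\chi(t)$ is symmetric about $t=0$, I would fold this interval onto $[0,T_I]$, which doubles the prefactor to $4\pi$ and leaves the single time integral $\int_0^{T_I}$ in the statement. Collecting terms reproduces the stated $\mathcal{L}^2_{I_n}(s)$.

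The main obstacle I anticipate is the correct bookkeeping of the temporal domain rather than any hard estimate: one must verify that each per-node factor carries the same time-averaged weight $\chi(t)$ as in the first-moment proof, that the integration limits before folding are $[-T_I,T_I]$, and that the symmetry of $\chi(t)$ is exactly what turns the $2\pi$ angular prefactor into the $4\pi$ appearing in $\mathcal{L}^2_{I_n}(s)$. Once this is matched to the derivation underlying Theorem \ref{thm1}, the computation is routine, and the result then follows immediately from the independence of $\Phi_1$ and $\Phi_2^a$.
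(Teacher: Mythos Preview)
Your proposal is correct and follows exactly the approach the paper intends: square the product representation \eqref{qfun}, factorize via (approximate) independence of $\Phi_1$ and $\Phi_2^a$, apply the PGFL to the squared per-node function, and fold the symmetric time integral over $[-T_I,T_I]$ to $[0,T_I]$ to obtain the $4\pi$ prefactor. The paper's own proof is the one-line remark that the result follows from \eqref{qfun} and the proof of Theorem~\ref{thm1}; you have simply spelled out those steps (and, incidentally, corrected the evident typo $\mathcal{L}^2_{I_2}\mathcal{L}^2_{I_2}$ to $\mathcal{L}^2_{I_1}\mathcal{L}^2_{I_2}$).
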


\begin{proof}
The result follows directly from \eqref{qfun} and the proof of Theorem \ref{thm1}.
\end{proof}

Finally, having the first two moments of $q_n^c$, we can approximate it as a Beta random variable with shape parameters as in Lemma \ref{beta}. Then, the SINR's meta distribution is evaluated by applying the complementary CDF of a Beta random variable given by \cite{HAE2}
\begin{equation}
F_n(x) = 1 - \frac{1}{\beta(\gamma,\delta)}\int_0^x z^{\gamma-1} (1 - z)^{\delta-1}\,dz,
\end{equation}
where $\beta(\gamma,\delta)$ denotes the Beta function.

\section{Numerical Results}\label{numer}
In this section, we present our numerical results to evaluate the performance of both the primary and secondary networks. Unless otherwise stated, in our results we consider the following parameters: $d=1$ \si{\meter}, $\sigma^2=-50$ \si{\dB m}, $\alpha=3$, $\lambda_1=\SI[per-mode=symbol]{0.1}[ ]{\per\square\metre\per\second}$, $\lambda_2=\SI[per-mode=symbol]{1}[ ]{\per\square\metre\per\second}$, $P_1=1$ \si{\watt}, $\epsilon=0.1$ \si{\joule}, $\mathcal{E}=0.5$ \si{\joule} and $R=\log(1+\zeta)$ bpcu. 

\begin{figure}[t]\centering
\includegraphics[width=0.8\linewidth]{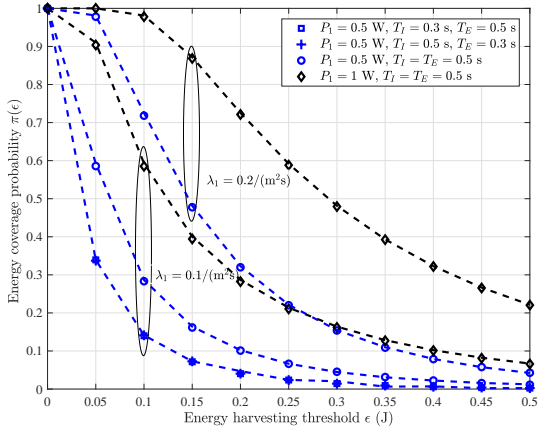}
\caption{Energy coverage probability versus energy harvesting threshold; markers and dashed lines correspond to simulation and analytical results, respectively.}\label{Ecoverage}
\end{figure}

Fig. \ref{Ecoverage} plots the energy coverage probability $\pi(\epsilon)$ versus the energy harvesting threshold $\epsilon$, where $\pi(\epsilon)$ is plotted for thresholds lower than the energy saturation level $\mathcal{E}$, after which the energy coverage is zero. We present the energy coverage probability for $P_1 = \{0.5, 1\}$ W, $T_I = \{0.3, 0.5\}$, $T_E = \{0.3, 0.5\}$ and $\lambda_1 = \{0.1, 0.2\}/(\si{\square\meter}\si{\second})$. In all cases, as the harvesting threshold increases, the energy coverage probability decreases. As can be seen, when $P_1 = 0.5$ W and $\lambda_1 = 0.1/(\si{\square\meter}\si{\second})$, the energy coverage probability is the same for the two cases of $T_I = 0.3\, \si{\second}$, $T_E = 0.5\, \si{\second}$ and $T_I = 0.5\, \si{\second}$, $T_E = 0.3\, \si{\second}$. This is due to the fact that, the amount of harvested energy is a function of the product $T_I T_E$, (see \eqref{mean}), and therefore by keeping $T_I T_E$ constant, would not affect the energy coverage probability. This is also shown in the case with $T_I = T_E = 0.5$ \si{\second}, which results in a higher $\pi(\epsilon)$. In addition, by increasing the power of the primary transmitters to $P = 1$ W and their density $\lambda_1$, the energy coverage probability is improved since in both cases, the power available for energy harvesting is higher. It can be observed, that at lower thresholds, an increase in the density i.e., $P_1 = 0.5$ W, $\lambda_1 = 0.2/(\si{\square\meter}\si{\second})$ performs better than increasing the power i.e., $P_1 = 1$ W, $\lambda_1 = 0.1/(\si{\square\meter}\si{\second})$, while for $\epsilon > 0.3$ J better performance is obtained with more power rather than higher density.

\begin{figure}[t]\centering
\includegraphics[width=0.8\linewidth]{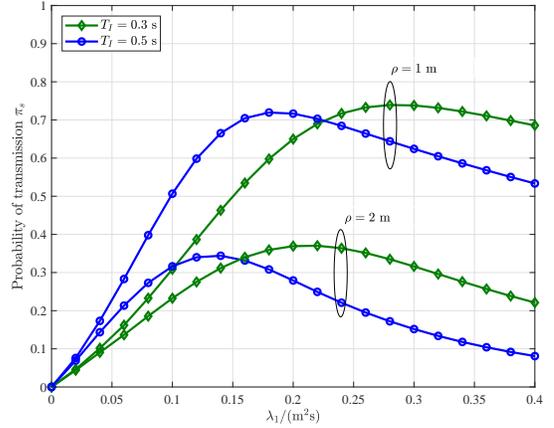}
\caption{Secondary transmitters' transmit probability versus the primary transmitters' density; $P_1 = 1$ \si{\watt}, $T_E = 0.5$ s and $\epsilon = 0.1$ \si{\joule}.}\label{PS}
\end{figure}

\begin{figure}[t]\centering
	\includegraphics[width=0.8\linewidth]{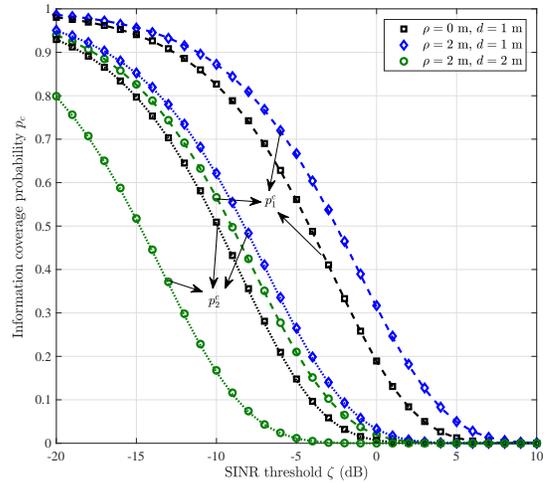}
	\caption{Information coverage probability versus the SINR threshold; markers and dashed lines correspond to simulation and analytical results respectively. $\lambda_1=\SI[per-mode=symbol]{0.1}[ ]{\per\square\metre\per\second}$, $\lambda_2=\SI[per-mode=symbol]{1}[ ]{\per\square\metre\per\second}$, $P_1=1$\si{\watt}, $\epsilon=0.1$ \si{\joule} and $\mathcal{E}=0.5$ \si{\joule}.}\label{cover}
\end{figure}
In Fig. \ref{PS}, the transmission probability $\pi_s$ for the secondary transmitters is depicted with respect to the primary transmitters' density for $\rho = \{1, 2\}$ \si{\meter} and $T_I = \{0.3, 0,5\}$ s. By comparing the cases with $\rho = 1$ \si{\meter} and $\rho = 2$ \si{\meter}, we can see that with a higher radius $\rho$ of the guard zone, the probability of transmission is lower. This is due to the fact that, a higher $\rho$ implies a lower probability for an empty guard zone $\pi_\rho$ and subsequently $\pi_s$ is lower. In addition, in all the considered plots, we can see that as the density $\lambda_1$ increases, $\pi_s$ increases until it reaches to a peak value after which it starts decreasing again. Recall from Fig. \ref{Ecoverage}, that a lower density implies a lower $\pi(\epsilon)$, whereas for low densities, $\pi_\rho$ is high due to sparser spatial distribution of the transmitters. As such, as the density increases, $\pi(\epsilon)$ increases rapidly from zero towards $\pi(\epsilon) = 1$, which results in a better $\pi_s$. On the other hand, as the density increases while the energy coverage probability remains $\pi(\epsilon) = 1$, the probability $\pi_\rho$ decreases, which justifies the decrease in $\pi_s$ after its peak value. Furthermore, when comparing the cases with $T_I = \{0.3, 0.5\}$ \si{\second}, we can see that a higher $T_I$ leads to a faster growth of $\pi_s$, which occurs from the highest $\pi(\epsilon)$ and therefore $\pi_s$ reaches its peak value at a lower density. However, after the peak value of $\pi_s$, the decrease is more sharp in the case of $T_I = 0.5$ \si{\second}. This is expected since a larger $T_I$ imposes a longer active phase and thus the secondary transmitter's decision to transmit depends on a denser network implying a lower $\pi_\rho$, (see \eqref{rho Eq}).

\begin{figure}[t]\centering
	\includegraphics[width=0.8\linewidth]{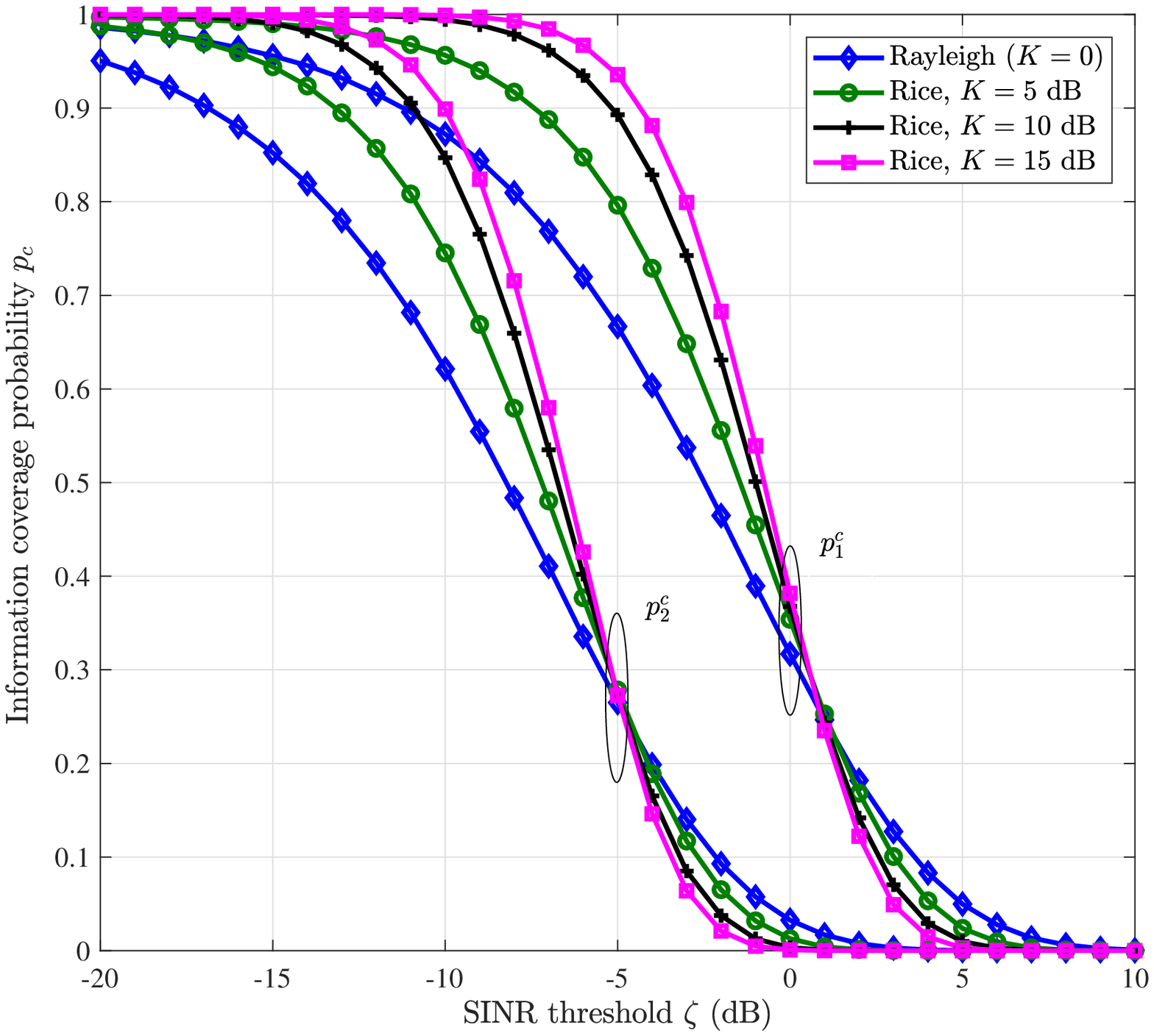}
	\caption{Information coverage probability versus the SINR threshold with Rician fading between the pairs. $\lambda_1=\SI[per-mode=symbol]{0.1}[ ]{\per\square\metre\per\second}$, $\lambda_2=\SI[per-mode=symbol]{1}[ ]{\per\square\metre\per\second}$, $d=1$ m, $\rho=2$ m, $P_1=1$\si{\watt}, $\epsilon=0.1$ \si{\joule} and $\mathcal{E}=0.5$ \si{\joule}.}\label{rice}
\end{figure}
\begin{figure}[t]\centering
	\includegraphics[width=0.8\linewidth]{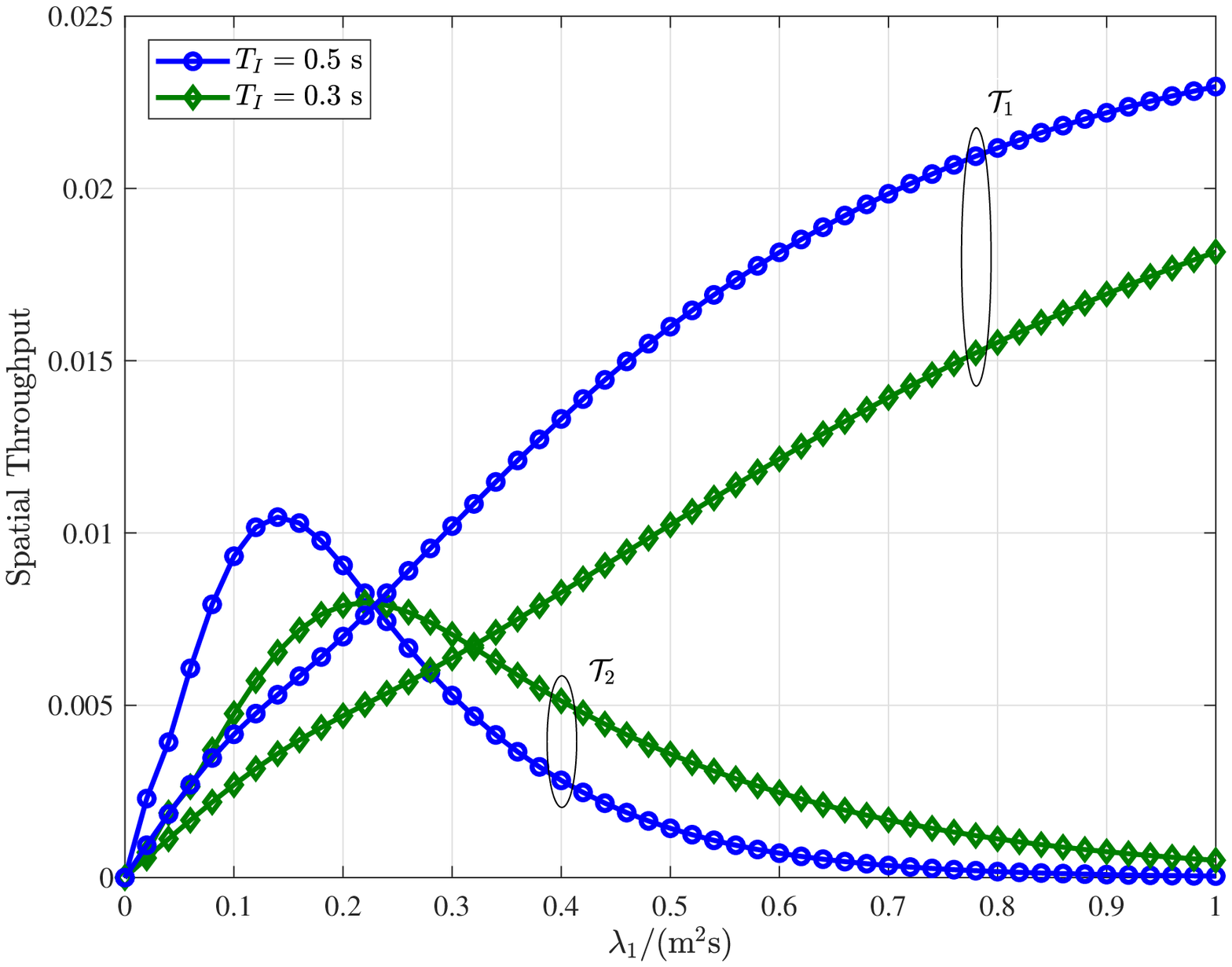}
	\caption{Spatial Throughput versus the primary transmitters' density; $\zeta=-10$ dB, $\lambda_2=\SI[per-mode=symbol]{1}[ ]{\per\square\metre\per\second}$, $P_1=1$ \si{\watt}, $T_E=0.5$ s, $\epsilon=0.1$, \si{\joule}, $\mathcal{E}=0.5$ \si{\joule} and $\rho=2$ m.}\label{Thr}
\end{figure}
Fig. \ref{cover} illustrates the coverage probability versus the SINR threshold $\zeta$ for the primary and secondary receivers with and without cognition i.e., $\rho = 2$ \si{\meter} and $\rho = 0$ \si{\meter} for $d=1$ m. As expected the coverage probability decreases as the SINR threshold increases. In addition, for both the primary and the secondary receivers, when no cognition is considered ($\rho = 0$ \si{\meter}), implies that $\pi_\rho = 1$, and therefore the density of the secondary transmitters is higher. As a result, the interference from the secondary network is stronger, which results in a lower coverage probability. On the other hand, when the guard zone radius is $\rho > 0$ \si{\meter}, a lower $\pi_\rho$ is obtained. This subsequently implies a lower $\pi_s$ and thus the interference level is lower. Furthermore, while the gain in coverage probability that the two receivers have with cognition is the same, we can see a lower coverage probability at the secondary receiver. This is due to the fact that the interference for both the receivers is the same, while the transmit power of the primary transmitter is higher than the one of the secondary transmitter i.e., $P_1 > P_2$ which implies a better performance for the primary receiver. In addition, as can be seen, when we increase the distance between the pairs i.e., $d=2$, the coverage probability, for both the primary and the secondary receiver, decreases since the path-loss attenuation is higher. The coverage probability is also demonstrated in Fig. \ref{rice} where the channel fading between the pairs is considered to follow Rician distribution. We present simulation results for different Rician factors $K$ and compare with the case where Rayleigh fading is assumed ($K=0$). As can be seen, at low SINR thresholds the highest the Rician factor $K$, the better performance occurs for both the primary and the secondary user. On the other hand, at high SINR thresholds as $K$ increases, the coverage probability becomes lower whereas the highest coverage probability is obtained over Rayleigh fading.

\begin{figure}[t]\centering
	\includegraphics[width=0.8\linewidth]{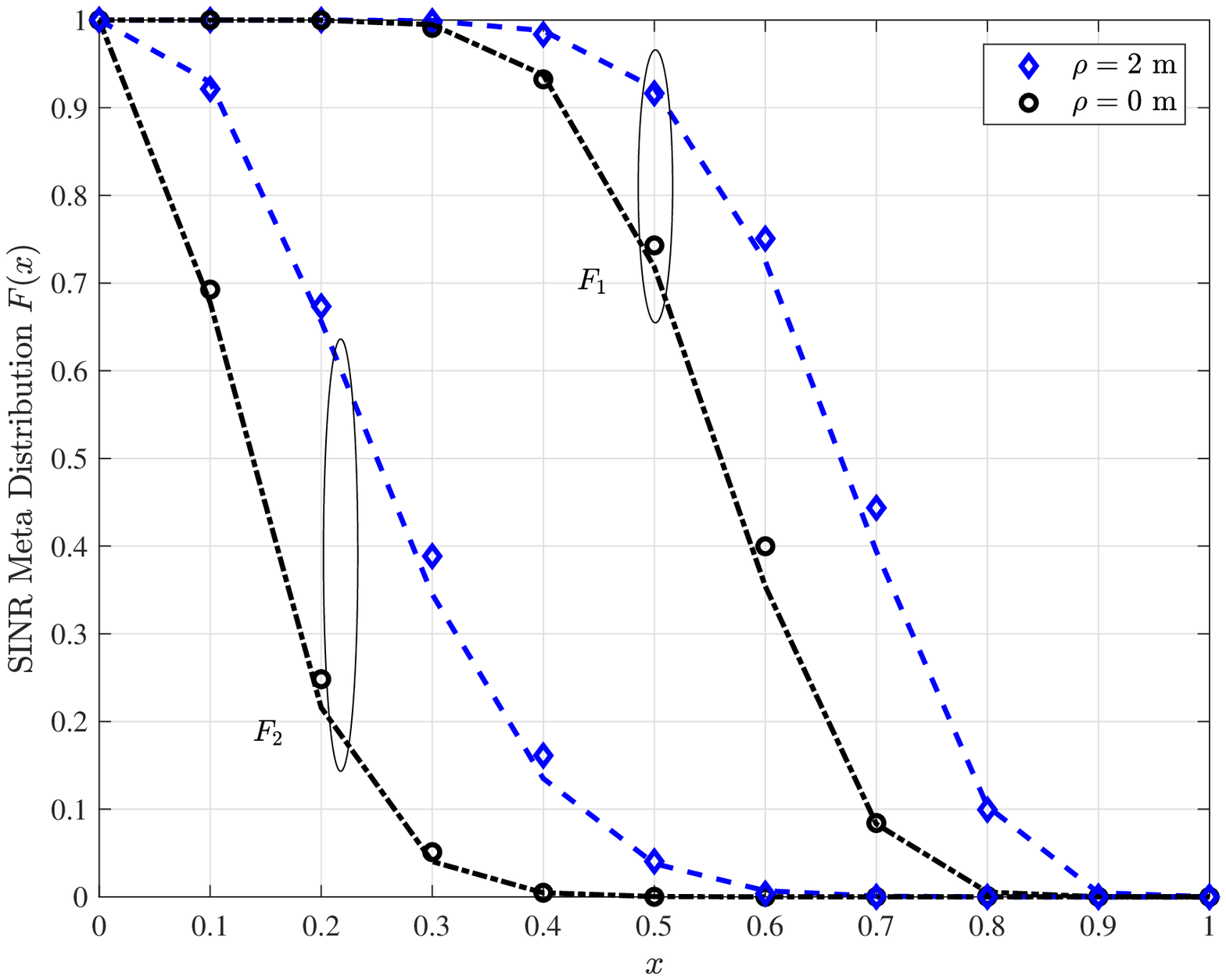}
	\caption{SINR meta Distribution versus threshold x; markers and dashed lines correspond to simulation and analytical results respectively. $\zeta=-5$ dB, $\lambda_1=\SI[per-mode=symbol]{0.1}[ ]{\per\square\metre\per\second}$, $\lambda_2=\SI[per-mode=symbol]{1}[ ]{\per\square\metre\per\second}$, $d=1$ m, $P_1=1$\si{\watt}, $\epsilon=0.1$ \si{\joule} and $\mathcal{E}=0.5$ \si{\joule}.}\label{meta}
\end{figure}

In Fig. \ref{Thr}, we present the spatial throughput of each of the two networks, primary and secondary, versus the primary transmitters' density for $T_I = \{0.3, 0.5\}$ \si{\second}. Regarding the spatial throughput of the secondary transmitters $\mathcal{T}_2$, we can make similar observations as with the probability of transmission $\pi_s$ (see Fig. \ref{PS}). This is because $\mathcal{T}_2$ is a linear function of $\pi_s$ and therefore $\mathcal{T}_2$ follows the same behavior for both $T_I = \{0.3, 0.5\}$ \si{\second}. On the other hand, the spatial throughput of the primary network increases as the density of the primary transmitters increases, which is also expected since $\mathcal{T}_1$ is a linear function of $\lambda_1$. This is also valid for the values of $T_I$; by using a higher $T_I$, a better throughput is achieved. In addition, we can see that the increase in $\mathcal{T}_1$ starts rapidly and after some point the increase occurs more slowly. This is due to the fact that the spatial throughput is also a function of the coverage probability, where a higher $\lambda_1$ implies higher interference power and hence a lower coverage probability.

\begin{figure}[t]\centering
	\includegraphics[width=0.8\linewidth]{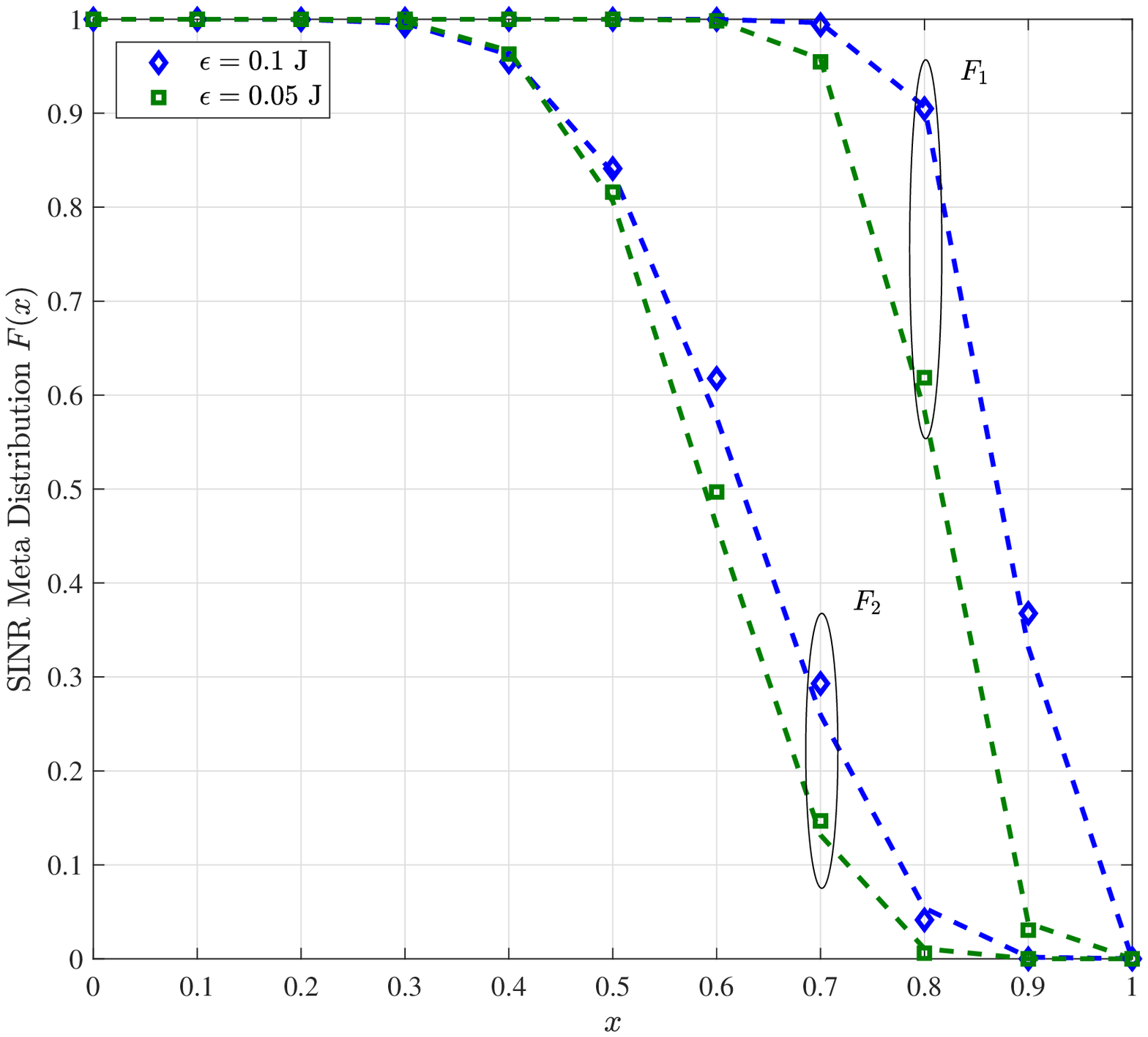}
	\caption{SINR meta Distribution versus threshold x; markers and dashed lines correspond to simulation and analytical results respectively. $\zeta=-10$ dB, $\lambda_1=\SI[per-mode=symbol]{0.1}[ ]{\per\square\metre\per\second}$, $\lambda_2=\SI[per-mode=symbol]{1}[ ]{\per\square\metre\per\second}$, $P_1=1$\si{\watt}, $\rho=2$ \si{\meter} and $\mathcal{E}=0.5$ \si{\joule}.}\label{meta2}
\end{figure}
In Fig. \ref{meta}, we plot the meta distribution of the SINR with respect to the threshold $x$, for the primary and secondary network, with and without cognition. As expected, $F(x)$ decreases as the threshold $x$ increases. In addition, for both the primary and the secondary network, we can make similar observations with Fig. \ref{cover}. When no cognition is considered ($\rho = 0$ \si{\meter}), the interference from the secondary network is stronger resulting in a lower coverage probability and subsequently lower $F(x)$ is obtained. In contrast, when the guard zone radius is $\rho = 2$ \si{\meter}, a higher coverage probability is achieved resulting in a higher $F(x)$. Note that, different from the coverage probability, which is the cumulative CDF of the SINR, the meta distribution of the SINR $F(x)$ consists of a more reliable performance metric of the network, since it express the fraction of receivers that achieve a given threshold $\zeta$ with a given success rate i.e., the coverage probability. Moreover, we present the meta distribution for lower SINR threshold i.e., $\zeta=-10$ dB, in Fig. \ref{meta2}, for energy thresholds $\epsilon=0.05$ and $\epsilon=0.1$. As expected, since the SINR threshold is lower, the meta distribution performs better in comparison with Fig. \ref{meta}. In addition, when $\epsilon=0.05$, the energy coverage probability, and subsequently the transmit probability of the secondary transmitters are both higher. As a result, the density of the active secondary transmitters $\lambda_2^a$ increases, yielding to stronger interference power, which results in a lower meta distribution. As such, with a lower energy threshold, even though the density of the active secondary transmitters increases, the population of the pairs (both primary and secondary) that achieve the required SINR, decreases.

\section{Conclusion}\label{Conclusion}
In this paper, we studied an ad hoc cognitive secondary network which is wirelessly powered by ambient RF signals and is underlaid with an ad hoc primary network. We considered asynchronous channel access from both the networks in order to capture the sporadic channel traffic in IoT environments. For this purpose, the two networks were modeled by TS-PPP and by exploiting tools from stochastic geometry we provided rigorous mathematical analysis for the evaluation of the system's performance. We derived closed form expressions for the transmission probability of the secondary transmitters as well as the spatial throughput and the meta distribution of the SINR for both primary and secondary networks. We presented numerical results which validated our analysis and provided general insights on the main system parameters. We discussed the trade-off between the energy coverage and transmission probability regarding the choice of the information and energy harvesting slots and we showed the subsequent effect on the spatial throughput performance. Finally, our results prove that with cognition, the two networks can coexist with the proper combination of densities and guard zone area to obtain an adequate quality of service. 

\appendix
\subsection{Proof of Lemma \ref{characteristic}}\label{app1}
The characteristic function of $E_H$ is evaluated as follows
\begin{align}
&\mathcal{F}(z) = \mathbb{E}[\exp(\imath z E_H)]\nonumber\\
&= \mathbb{E}\left[\exp\left(\imath z \int_0^{T_E} y(t)\,dt\right)\right]\nonumber\\
&= \mathbb{E}\left[\exp\left(\imath z P_1 \sum_{(x,t) \in \Phi_1} h_x (1+r_x^{\alpha})^{-1} \int_0^{T_E} \phi(\tau,t) \,d\tau\right)\right]\nonumber\\
&= \mathbb{E}_{\Phi_1} \left[\mathbb{E}_{h_x} \left[\exp\left(\imath z P_1 \sum_{(x,t) \in \Phi_1} h_x (1+r_x^{\alpha})^{-1} \psi(t)\right)\right]\right],
\end{align}
where $\psi(t)=\int_0^{T_E}\phi(\tau,t)\,d\tau$ and is given as follows
	\begin{itemize}
		\item if $T_E>T_I$,
		\begin{equation}\label{psi1}
			\psi(t)= \begin{cases}
				T_I+t, & -T_I<t\leq 0,\\
		T_I, & 0<t\leq T_E-T_I,\\
		T_E-t, & T_E-T_I<t<T_E,	
			\end{cases}
		\end{equation}
		\item if $T_E=T_I$,
				\begin{equation}\label{psi2}
				\psi(t)= \begin{cases}
				T_I+t, & -T_I<t<0,\\
				T_I-t, & 0<t<T_I,\\
				T_I, & t=0,				
				\end{cases}
				\end{equation}		
		\item if $T_E<T_I$,
		\begin{equation}\label{psi3}
		\psi(t)= \begin{cases}
		T_I+t, & -T_I<t\leq T_E-T_I,\\
		T_E, & T_E-T_I<t\leq 0,\\
	    T_E-t, &  0<t< T_E-T_I.	
		\end{cases}
		\end{equation}
	\end{itemize}
	
\noindent Then, from the moment generating function of an exponential random variable, since $h_x$ are independent and identically distributed exponential random variables we get
\begin{equation}
\mathcal{F}(z)=\mathbb{E}_{\Phi_1} \left[\prod_{(x,t) \in \Phi_1} \frac{1}{1 -\imath z P_1 \psi(t) (1+r_x^{\alpha})^{-1}}\right].
\end{equation}

\noindent Finally, by using the probability generating functional of a PPP \cite{HAE} and since $\Phi_1$ is a TS-PPP we take the expected value over both, time and locations as follows
\begin{align}
&\mathcal{F}(z) = \exp\Bigg(-2 \pi \lambda\nonumber\\
&\times \int_{-T_I}^{T_E} \int_0^\infty \left(1-\frac{1}{1 - \imath z P_1 \psi(t) (1+u^{\alpha})^{-1}}\right)u \,du\,dt\Bigg).\label{cqapp2}
\end{align}
Note that, the integral limits $[-T_I,T_E]$ take into account all the transmitters which are active during the time interval $[0,T_E]$. That is, we take into account transmissions which have been already initiated before time zero and continue transmitting during the harvesting period. Recall that, each primary transmitter transmits for a period $T_I$ which explains the lower limit, and the harvesting slot lasts for a period $T_E$ which explains the upper limit.

\subsection{Proof of Proposition \ref{sec_power}}\label{E EH}
From equations \eqref{power} and \eqref{transmit power}, the average transmit power of the secondary transmitters is evaluated as
\begin{align}
P_2 &= \frac{\mathbb{E}[E_H]}{T_I} (\mathbb{P}(E_H \leq  \mathcal{E}) - \mathbb{P}(E_H \leq \epsilon)) + \frac{\mathcal{E}}{T_I} \mathbb{P}(E_H > \mathcal{E})\nonumber\\
&=\frac{\mathbb{E}[E_H]}{T_I} (\pi(\mathcal{E}) - \pi(\epsilon)) + \frac{\mathcal{E}}{T_I} \pi(\mathcal{E}),
\end{align}
where $\pi(x)$ is the complementary CDF of the random variable $E_H$ provided in Proposition \ref{E CDF}. We now need to calculate the expected value of $E_H$ as follows.
\begin{align}\label{mean}
\mathbb{E}[E_H] &= \int_0^{T_E} \mathbb{E}[y(t)]\,dt \nonumber\\
&\stackrel{(a)}{=} \int_0^{T_E} \mathbb{E}\left[P_1\sum_{(x,t_x) \in \Phi_1} \phi(t,t_x) (1+r_x^a)^{-1}\right]\,dt\nonumber\\
&\stackrel{(b)}{=} 2 \pi \lambda_1 P_1 \int_0^{T_E} \int_{-\infty}^\infty \int_0^\infty \phi(t,\tau) (1+r^a)^{-1} r \,dr\,d\tau\,dt\nonumber\\
&\stackrel{(c)}{=} 2 \pi^2 \lambda_1 T_E T_I \frac{P_1}{\alpha} \csc\left(\frac{2\pi}{\alpha}\right),
\end{align}
where $(a)$ occurs by using the fact that $h_x$ are independent and identically distributed exponential random variables with unit variance; $(b)$ follows from the Campbell's theorem and $(c)$ follows from the transformation $x^\alpha \to u$ and \cite[3.194.4]{GRAD}.

\subsection{Proof of Theorem \ref{thm1}}\label{app2}
By substituting $\text{SINR}_1$ given by \eqref{sir1}, the coverage probability is evaluated as follows 
\begin{align}
&p_1^c(\zeta) = \mathbb{P}\left(\frac{P_1 h_0 (1+d^a)^{-1}}{\mathrm{I}_1+\mathrm{I}_2+\sigma^2} \geq \zeta\right)\nonumber\\
&= \mathbb{P}\left(h_0 \geq \zeta \frac{(\mathrm{I}_1 + \mathrm{I}_2 + \sigma^2) (1+d^{\alpha})}{P_1}\right)\nonumber\\
&= \mathbb{E}\left[\exp\left(-\zeta \frac{\mathrm{I}_1 (1+d^{\alpha})}{P_1}\right)\right] \mathbb{E}\left[\exp\left(-\zeta\frac{\mathrm{I}_2 (1+d^{\alpha})}{P_1}\right)\right] \nonumber\\
&\times \exp\left(-\zeta\frac{\sigma^2 (1+d^{\alpha})}{P_1}\right),\label{h}
\end{align}
which follows from the complementary CDF of an exponential random variable since $h_0 \sim \exp(1)$. We now set $s = \frac{\zeta (1+d^{\alpha})}{P_1}$ and denote by $\mathcal{L}_{I_n}(s) \triangleq \mathbb{E}[\exp(-\mathrm{I}_n s)]$ the Laplace transform of interference term $\mathrm{I}_n$ evaluated at $s$.

From \eqref{timeaveragedI1} and \eqref{timeaveragedI2}, we obtain the time averaged interference from the primary and secondary transmitters $\mathrm{I}_1$ and $\mathrm{I}_2$, by substituting \eqref{I1} and \eqref{I2}. Then, we have
\begin{align}
&\mathrm{I}_1 = P_1 \sum_{(x,t_x) \in \Phi_1} \chi(t_x) h_x (1+r_x^{a})^{-1},\\
&\mathrm{I}_2 = P_2 \sum_{(y,t_y) \in \Phi_2^a} \chi(t_y) g_y (1+w_y^{a})^{-1},\label{int2}
\end{align}
where \cite{Aloha}
\begin{equation}\label{indicator}
\chi(t_i) \triangleq \frac{1}{T_I} \int_0^{T_I} \phi(t,t_i) \,dt =
\begin{cases}
\frac{T_I - \vert t_i \vert}{T_I}, & t_i \in [-T_I,T_I],\\
0, & \text{otherwise},
\end{cases}
\end{equation}
and $\phi(t,t_i) \triangleq \mathbf{1}(t_i \leq t \leq t_i+T_I)$. Therefore, the Laplace transform of $\mathrm{I}_1$ and $\mathrm{I}_2$ is calculated as
\begin{align}
&\mathcal{L}_{I_n}(s) = \mathbb{E}[\exp(-\mathrm{I}_n s)]\nonumber\\
&=\mathbb{E}_{\Phi_n} \left[\mathbb{E}_{h_x} \left[\exp\left(-s P_n \sum_{(x,t_x) \in \Phi_n} \chi(t_x) h_x (1+r_x^{\alpha})^{-1}\right) \right]\right]\nonumber\\
&\stackrel{(a)}{=} \mathbb{E}_{\Phi_n}\left[\prod_{(x,t_x) \in \Phi_n} \frac{1}{1 + s P_n \chi(t_x) (1+r_x^{\alpha})^{-1}} \right]\nonumber\\
&\stackrel{(b)}{=} \exp\Bigg(-2\pi\lambda_n \nonumber\\
&\times\int_{-T_I}^{T_I} \int_0^\infty \left(1-\frac{1}{1 + s P_n \chi(t) (1+u^{\alpha})^{-1}}\right)u du\,dt\Bigg),\nonumber\\
&\stackrel{(c)}{=}\exp\Bigg(-2 \lambda_n \frac{\pi^2 P_n s}{\alpha}\csc \left(\frac{2 \pi}{\alpha}\right)\nonumber\\ &\quad \quad \quad \quad \quad \quad \times\int_{-T_I}^{T_I}\chi(t)\left(1+\chi(t)P_n s\right)^{\frac{2-\alpha}{\alpha}}\, dt \Bigg)\label{eqapp2},
\end{align}
where $(a)$ follows from the moment generating function of an exponential random variable and since $h_x$ are independent and identically distributed exponential random variables; $(b)$ make use of the probability generating functional of a PPP \cite{HAE}, thus unconditioning on both time and space; $(c)$ is obtained by \cite[3.222.2]{GRAD} and the evaluation of the last integral is straightforward. Then, the final result follows by substituting \eqref{eqapp2} in \eqref{h}.

\end{document}